\numberwithin{equation}{section}
\newcommand{\NB}{N}
\newcommand{\NF}{n}
\newcommand{\NL}{N_L}
\theoremstyle{plain}
\newtheorem{theoremcounter}{}[]
\newtheorem{proposition}[theoremcounter]{Proposition}
\newcommand{\bea}{\begin{eqnarray}}
	\newcommand{\eea}{\end{eqnarray}}
\newcommand{\bean}{\begin{eqnarray*}}
	\newcommand{\eean}{\end{eqnarray*}}
\newcommand{\nn}{\nonumber\\}
\newcommand{\Sl}{\sum\limits}
\def\Label#1{\label{#1}}
\renewcommand{\eqref}[1]{eq.~(\ref{#1})}
	\newcommand{\figref}[1]{Fig.~\ref{#1}}
	\newcommand{\appref}[1]{appendix~\ref{#1}}
	\def\eps{\epsilon}
	\def\Sl{\sum\limits}
	\newcommand{\ctobedelete}[1]{}
\title{Rationalisation of multiple square roots in Feynman integrals}
\author[a,b]{Georgios Papathanasiou,}
\author[c]{Stefan Weinzierl,}
\author[b,d,f]{Konglong Wu}
\author[e,f]{and Yang Zhang}
\affiliation[a]{Department of Mathematics, City, University of London,
Northampton Square, EC1V 0HB, London, UK}
\affiliation[b]{Deutsches Elektronen-Synchrotron DESY, Notkestr. 85, 22607 Hamburg, Germany}
\affiliation[c]{PRISMA Cluster of Excellence, Institut für Physik, Staudinger Weg 7, Johannes Gutenberg-Universität Mainz, D - 55099 Mainz, Germany}
\affiliation[d]{School of Physics and Technology, Wuhan University, No.299 Bayi Road, Wuhan 430072, China}
\affiliation[e]{Interdisciplinary Center for Theoretical Study, University of Science and Technology of China, Hefei, Anhui 230026, China}
\affiliation[f]{Peng Huanwu Center for Fundamental Theory, Hefei, Anhui 230026, China}
\emailAdd{georgios.papathanasiou@desy.de}
\emailAdd{weinzierl@uni-mainz.de}
\emailAdd{konglong.wu@desy.de}
\emailAdd{yzhphy@ustc.edu.cn}
\preprint{
\begin{flushright} DESY-25-001 \\ MITP-25-004 \\ USTC-ICTS/PCFT-24-28
\end{flushright}
}
\abstract{
Feynman integrals are very often computed from their differential equations.
It is not uncommon that the $\varepsilon$-factorised differential equation contains only dlog-forms with algebraic arguments,
where the algebraic part is given by (multiple) square roots.
It is well-known that if all square roots are simultaneously rationalisable, the Feynman integrals can be expressed
in terms of multiple polylogarithms.
This is a sufficient, but not a necessary criterium.
In this paper we investigate weaker requirements.
We discuss under which conditions we may use different rationalisations in different parts of the calculation.
In particular we show that we may use different rationalisations if they correspond to different parameterisations
of the same integration path.
We present a non-trivial example -- the one-loop pentagon function with three adjacent massive external legs involving seven square roots --
where this technique can be used to express the result in terms of multiple polylogarithms.
} 
\date{}
\begin{document}
\maketitle
\flushbottom

\newpage

\section{Introduction}

Feynman integrals play an essential role in precision calculations for particle physics phenomenology.
Of particular importance is the question to which class of functions a particular Feynman integral evaluates.
The simplest class of functions are the multiple polylogarithms.
They are generalisations of the ordinary logarithm $\ln(x)$ and
correspond to iterated integrals on a curve of genus zero.
More complicated Feynman integrals are associated with curves of higher genus or geometries of higher dimension.
The most prominent examples for the latter case are Calabi-Yau geometries.
An important question is to decide, whether a given Feynman integral can be expressed in terms of functions from a given class of functions.
Already for the simplest class of functions, i.e. the multiple polylogarithms, this is a non-trivial question.

Multiple polylogarithms are a well-understood class of functions.
In particular, numerical evaluation routines are available for all complex arguments \cite{Vollinga:2004sn,Maitre:2005uu,Maitre:2007kp,Naterop:2019xaf,Duhr:2019tlz,Wang:2021imw}.
In typical phenomenological applications we need to evaluate Feynman integrals and scattering amplitudes for different values of the arguments many times
(${\mathcal O}(10^6)-{\mathcal O}(10^8)$ evaluations are not uncommon).
As there are numerical evaluation routines available for multiple polylogarithms,
it is often beneficial -- whenever this is possible -- to express Feynman integrals and scattering amplitudes in terms of multiple polylogarithms.
This approach has its limitations when the expression in terms of multiple polylogarithms becomes too large and consequentially the numerical evaluation becomes too slow.
In this case one might prefer alternative methods, 
for example dedicated low-dimensional integral representations, which are evaluated numerically \cite{Chicherin:2017dob,Chicherin:2020oor,Chicherin:2021dyp}
or the numerical integration of the differential equation discussed below \cite{Liu:2022chg,Liu:2017jxz,Liu:2022mfb,Hidding:2020ytt}.

A popular technique to compute Feynman integrals analytically is the method of differential equations \cite{Kotikov:1990kg,Kotikov:1991pm,Remiddi:1997ny,Gehrmann:1999as}.
Within this approach, one first derives a system of differential equations for the yet unknown Feynman integrals.
In a second step one tries to find a transformation to an $\eps$-factorised form \cite{Henn:2013pwa}.
Finally, in a third step one solves the $\eps$-factorised system of differential equations order by order in $\eps$ in terms 
of iterated integrals.
The first and the third step are algorithmic.
Thus, the task of computing Feynman integrals reduces to the task of finding a transformation for a system of differential
equations to an $\eps$-factorised form.

Now let us assume that we have an $\eps$-factorised differential equation.
If all entries of the connection matrix $A$ are dlog-forms with arguments, which are rational functions
of the kinematic variables $x$, it is straightforward to show that all iterated integrals 
can be written as multiple polylogarithms.\footnote{For any path the arguments of the dlog-forms are rational functions of the path parameter. Finding (numerically) the roots of the polynomials in the numerator and in the denominator and by using 
$\ln(ab) = \ln a + \ln b$ and $\ln(\frac{a}{b}) = \ln a - \ln b$ one may write the iterated integrals in terms of multiple polylogarithms.}
Thus, all
Feynman integrals in this family can be expressed
in terms of multiple polylogarithms to all orders in the dimensional regularisation parameter $\eps$.

However, it is not uncommon that one has an $\eps$-factorised differential equation, where 
all entries of the connection matrix $A$ are dlog-forms with algebraic arguments and the
algebraic part is given by (multiple) square roots.
It is well-known that if all square roots are simultaneously rationalisable, 
the change of variables used in the rationalisation converts the dlog-forms with algebraic arguments
to dlog-forms with rational arguments and -- as a result --
the Feynman integrals can again be expressed
in terms of multiple polylogarithms.
This argument shows that the requirement that all square roots are simultaneously rationalisable is a sufficient condition
for expressing the Feynman integrals in terms of multiple polylogarithms.
However, it is not a necessary condition.
To see this, one may consider the example of the mixed QCD-electroweak corrections to the Drell--Yan process.
It is known that the three square roots appearing in this example cannot be rationalised simultaneously \cite{Besier:2019hqd}.
However, the result can be expressed in terms of multiple polylogarithms \cite{Heller:2019gkq}.
The latter can be shown using either direct integration \cite{Panzer:2014caa,Bourjaily:2021lnz} 
or symbol techniques \cite{Goncharov:2010jf,Duhr:2011zq}.
Hence, the requirement of being simultaneously rationalisable is not a necessary condition.
The simplified differential equation approach \cite{Papadopoulos:2014lla,Papadopoulos:2015jft,Syrrakos:2021nij}
provides a further technique by introducing an auxiliary parameter.

On the other side it is also known, that one may construct an example (outside the context of Feynman integrals) 
with an $\eps$-factorised differential equation with dlog-forms with algebraic arguments
where the resulting functions when integrated over a specific cycle
cannot be expressed in terms of multiple polylogarithms \cite{Duhr:2020gdd}.
Currently it is not known if this phenomenon can or cannot occur in the context of Feynman integrals.
 
In view of these findings it is highly desirable to find weaker conditions under which 
Feynman integrals obeying an $\eps$-factorised differential equation with dlog-forms with algebraic arguments can be expressed in term
of multiple polylogarithms.
This is the topic of this paper.
We discuss under which conditions we may use different rationalisations in different parts of the calculation.
To this aim we review path independence and parameterisation independence of iterated integrals.
Path independence holds only if a certain condition due to Chen is met. 
In the context of Feynman integrals we may divide the full result into smaller subsets,
consisting of specific linear combinations of iterated integrals, which are always path independent.
We may use different rationalisations in different subsets.

Furthermore, any iterated integral is invariant under re-parameterisation of the same integration path.
Hence, we may use within the same subset different rationalisations, if they correspond to different parameterisations of the same
integration path.
At first sight, this sounds very specific and it is not clear if there is actually a real application of this technique.
However, we present a non-trivial example -- the one-loop pentagon function with three adjacent massive external legs involving seven square roots --
where this technique can be used to express the result in terms of multiple polylogarithms.

In principle, we can push things even further:
Consider dividing a path-independent linear combination of iterated integrals into two subsets.
In general, these two subsets will not be individually path-independent.
However, one may add and subtract compensation terms to restore the path-independence.
This method is more general and has been outlined in ref.~\cite{Kreer:2021rim}, 
but comes at the price of introducing additional compensation terms.
The additional compensation terms may slow down the numerical evaluation.

We remark that the results of this article are also useful in the case, where all square roots are simultaneously rationalisable.
Rationalising all square roots simultaneously will introduce polynomials of high degree and in turn longer expressions in terms of multiple polylogarithms.
In this article we show under which conditions it is sufficient to rationalise only subsets of the square roots.
This will lead to polynomials of lower degree and in turn to shorter expressions in term of multiple polylogarithms.

This paper is organised as follows:
In the next section we introduce the general set-up and present the general theorems on 
path independence and parameterisation independence.
In section~\ref{sec:pentagon} we present the one-loop pentagon integral with three adjacent massive external legs.
In section~\ref{sect:conversion} we show explicitly how the techniques of section~\ref{sect:multiple_square_roots} are
applied to the one-loop pentagon integral. 
In section~\ref{sect:results} we discuss the results for one-loop pentagon integral with three adjacent massive external legs.
Finally, our conclusions are given in section~\ref{sec:conclusions}.
The article is supplemented by three appendices.
In appendix~\ref{sec:MomentumTwistorParametrization} we give details on
the momentum twistor representation,
in appendix~\ref{sec:R-termsInLetter} we list the polynomials entering the definition of the letters
and in appendix~\ref{sect:supplement} we describe the content of the
supplementary electronic file attached to the arxiv version of this article.

\section{Multiple square roots}
\label{sect:multiple_square_roots}

\subsection{Set-up}

We consider Feynman integrals, which depend on $\NB$ kinematic variables $\vec{x}=(x_1,\dots,x_{\NB})^T$.
We view the kinematic variables as coordinates on the kinematic space $X$.
Let $\vec{g}=(g_1,\dots,g_{\NF})^T$ be a vector of $\NF$ master integrals.
We assume that the master integrals satisfy an
$\eps$-factorised differential equation 
\bea
\label{diff_eq}
 d \vec{g}\left(\vec{x},\eps\right) & = & \eps \pmb{A}\left(\vec{x}\right) \vec{g}\left(\vec{x},\eps\right).
\eea
with an integrable connection $\pmb{A}$:
\bea
\label{integrability}
 \eps d \pmb{A} - \eps^2 \pmb{A} \wedge \pmb{A} & = & 0.
\eea
The differential $d$ is the differential in the kinematic variables $\vec{x}$:
\bea
 d & = & \sum\limits_{j=1}^{\NB} dx_j \frac{\partial}{\partial x_j}
\eea
As $\pmb{A}$ is independent of $\eps$, the $\eps^1$-term and the $\eps^2$-term in eq.~(\ref{integrability}) have to vanish separately and we 
get two individual equations
\bea
 d \pmb{A} \; = \; 0
 & \mbox{and} &
 \pmb{A} \wedge \pmb{A} \; = \; 0.
\eea
The first condition states that the entries of the $(\NF \times \NF)$-matrix $\pmb{A}$ are closed one-forms.
We denote a basis of the differential one-forms appearing in $\pmb{A}$ by $\omega_1,\dots,\omega_{\NL}$
and the ${\mathbb C}$-vector space spanned by those by $\Omega^1(X)$.
We write
\bea
 \pmb{A} & = & \sum\limits_{i=1}^{\NL} A_i \omega_i,
\eea
where the $A_i$'s are $(\NF \times \NF)$-matrices, whose entries are constant numbers.
We are in particular interested in the case, where the $\omega_i$'s are dlog-forms,
\bea
 \omega_i & = & d\ln\left(W_i\left(\vec{x}\right)\right),
\eea
where in turn the $W_i(\vec{x})$'s are algebraic functions of the kinematic variables $\vec{x}$.
The typical cases are that $W_i$ is either a rational function of $\vec{x}$ or of the form
\bea
\label{eq_algebraic_letter}
 W_i\left(\vec{x}\right) & = & \frac{P_i\left(\vec{x}\right)-\sqrt{Q_i\left(\vec{x}\right)}}{P_i\left(\vec{x}\right)+\sqrt{Q_i\left(\vec{x}\right)}},
\eea
where $P_i$ and $Q_i$ are polynomials in the variables $\vec{x}$.

We further assume that all master integrals have a Taylor expansion in the dimensional regularisation
parameter $\eps$, thus we may write
\bea
 \vec{g} \; = \; 
 \sum\limits_{j=0}^\infty \eps^j \vec{g}^{(j)}
 & \mbox{and} &
 g_k \; = \; 
 \sum\limits_{j=0}^\infty \eps^j g_k^{(j)},
 \;\;\; 1 \le k \le \NF.
\eea
Let $\vec{C}=(C_1,\dots,C_{\NF})^T$ be the boundary constants at $x=0$.
They have a similar Taylor expansion in $\eps$:
\bea
 C_k \; = \; 
 \sum\limits_{j=0}^\infty \eps^j C_k^{(j)},
 \;\;\; 1 \le k \le \NF.
\eea
Given the differential equation~(\ref{diff_eq}) and the boundary constants, we may express
the master integrals in terms of iterated integrals.
Let 
\bea
 \gamma & : & \left[0,1\right] \rightarrow X
\eea
be a parameterisation of a path in the kinematic space $X$ with starting point $\gamma(0)=\vec{0}$ and endpoint $\gamma(1)=\vec{x}$.
We write
\bea
 f_j\left(\lambda\right) d\lambda & = & \gamma^\ast \omega_j
\eea
for the pull-back of $\omega_j$  to the interval $[0,1]$.
If $f_r(\lambda)$ is regular at $\lambda=0$ we define the iterated integral by
\bea
 I_{\omega_1,\dots,\omega_r}\left[\gamma\right]
 & = &
 \int\limits_0^{\lambda} d\lambda_1 f_1\left(\lambda_1\right)
 \int\limits_0^{\lambda_1} d\lambda_2 f_2\left(\lambda_2\right)
 \dots
 \int\limits_0^{\lambda_{r-1}} d\lambda_r f_r\left(\lambda_r\right).
\eea
In case $f_r(\lambda)$ has a simple pole at $\lambda=0$ we use the standard ``trailing zero'' or tangential base point
prescription (see for example refs.~\cite{Brown:2013qva,Brown:2014aa,Walden:2020odh}).

Usually we evaluate an iterated integral along a standard path.
A typical example is given by the piecewise smooth path $\gamma_{\mathrm{standard}}$ consisting of a straight line from 
$(0,0,\dots,0,0)$ to $(0,0,\dots,0, \allowbreak x_{\NB})$, followed by a straight line from 
$(0,0,\dots,0,x_{\NB})$ to $(0,0,\dots,x_{\NB-1},x_{\NB})$, and the pattern continues in this way.
The last segment is given by straight line from $(0,x_2,\dots,x_{\NB-1},x_{\NB})$ to 
$(x_1, x_2, \dots, \allowbreak x_{\NB-1}, \allowbreak x_{\NB})$.

We may express the $\eps^r$-term of the $k$-th master integral
as a linear combination of iterated integrals of depth $\le r$
\bea
\label{eq_master_g_k_r}
 g_k^{(r)}\left(\vec{x}\right)
 & = &
 \sum\limits_{j=1}^r \sum\limits_{i_1,\dots,i_j=1}^{\NL}
 c^{(k, r)}_{i_1 \dots i_j} I_{\omega_{i_1},\dots,\omega_{i_j}}\left[\gamma\right].
\eea
We stress that an iterated integral $I_{\omega_{i_1},\dots,\omega_{i_r}}[\gamma]$
is a functional of the path $\gamma$ and not just a function of the endpoint $\vec{x}$.
We are interested in path independence under small variations of the path.
By a small variation of a path we mean a variation which does not change its homotopy class nor the trailing zero prescription.
In the following we will always use the term ``path independence'' as path independence under small variations.
The linear combination of iterated integrals appearing on the right-hand side 
of eq.~(\ref{eq_master_g_k_r}) is path-independent and defines therefore a function of the endpoint $\vec{x}$.
It is therefore worth reviewing under which conditions a linear combination of iterated integrals
is path-independent.

\subsection{Path independence}

Of particular interest are linear combinations of iterated integrals which for a 
fixed starting point $\vec{0}$ and fixed
end point $\vec{x}$ are independent of the path connecting $\vec{0}$ with $\vec{x}$.
A single iterated integral $I_{\omega_{i_1},\dots,\omega_{i_r}}\left[\gamma\right]$ 
is in general not path independent.
A criterium to decide if a given linear combination of iterated integrals is path independent can be stated
as follows \cite{Chen}:
There is a one-to-one correspondence between 
ordered sequences of differential one-forms $\omega_{i_1}$, $\omega_{i_2}$, $\dots$,$\omega_{i_r}$
and elements in the tensor algebra $(\Omega(X))^{\otimes r}$ 
(where $\Omega(X)$ denotes the vector space generated by the wedge products of $\omega_1, \dots, \omega_{\NL}$)
of the form
\bea
 \omega_{i_1} \otimes \omega_{i_2} \otimes \dots \otimes \omega_{i_r}.
\eea
It is customary to denote the latter as
\bea
 \left[ \omega_{i_1} | \omega_{i_2} | \dots | \omega_{i_r} \right]
 & = &
 \omega_{i_1} \otimes \omega_{i_2} \otimes \dots \otimes \omega_{i_r}.
\eea
In the tensor algebra we define (taking into account that all $\omega$'s are closed)
\bea
 d \left[ \omega_{i_1} | \omega_{i_2} | \dots | \omega_{i_r} \right]
 & = &
 \sum\limits_{j=1}^{r-1} \left[ \omega_{i_1} | \dots | \omega_{i_{j-1}} | \omega_{i_j} \wedge \omega_{i_{j+1}} | \omega_{i_{j+2}} | \dots | \omega_{i_r} \right].
\eea
Let us now consider a linear combination of iterated integrals of depth $\le r$ with constant coefficients
\bea
 g
 & = &
 \sum\limits_{j=1}^r \sum\limits_{i_1,\dots,i_j=1}^{\NL}
 c_{i_1 \dots i_j} I_{\omega_{i_1},\dots,\omega_{i_j}}\left[\gamma\right]
\eea
and the corresponding element in the tensor algebra
\bea
 B
 & = &
 \sum\limits_{j=1}^r \sum\limits_{i_1,\dots,i_j=1}^{\NL}
 c_{i_1 \dots i_j} \left[\omega_{i_1}|\dots|\omega_{i_j}\right].
\eea
$g$ is a homotopy functional (i.e. independent of small deformations of the integration path) if and only if 
\bea
\label{def_integrable_word}
 dB & = & 0.
\eea
The proof is due to Chen \cite{Chen}.
This is the sought-after criteria when a linear combination of iterated integrals is path independent.
We call any $B$ satisfying eq.~(\ref{def_integrable_word}) an 
integrable word.

While eq.~(\ref{def_integrable_word}) allows us to test easily if a given linear combination of 
iterated integrals is an integrable word, it does not lead in a simple way to a systematic procedure to construct
such a linear combination.
Nevertheless, there exist algorithms to find all integrable words.
For example, the program {\tt SymBuild} \cite{Mitev:2018kie} can be used to find all integrable words.

However, we know from the start, that the connection $\pmb{A}$ appearing in the differential equation
is integrable (or flat).
This guarantees that the linear combination of iterated integrals appearing in the solution 
for the $k$-th master integral $g_k$ is path independent.

We may refine this statement:
First of all, integrability holds for each order in the dimensional parameter $\eps$ independently, as we started 
from an $\eps$-factorised differential equation where the one-forms $\omega_i$ are independent of $\eps$.
This implies that the linear combination of iterated integrals appearing in the solution 
for the $\eps^r$-term of the $k$-th master integral $g^{(r)}_k(\vec{x})$ is path independent.

Secondly, there is a further refinement, which up to now and to the best of our knowledge
has not been stated clearly in the literature:
To prepare this statement let us first note
that each iterated integral appearing in $g^{(r)}_k(\vec{x})$ 
comes with a coefficient proportional to some boundary 
constant $C^{(j')}_{k'}$.
We may therefore consider the linear combination of iterated integrals appearing in $g^{(r)}_k(\vec{x})$
and being proportional to $C^{(j')}_{k'}$.
\begin{proposition}
\label{proposition_path_independence}
The linear combination of iterated integrals appearing in $g^{(r)}_k(\vec{x})$ and being proportional to $C^{(j')}_{k'}$
is by itself path independent.
\end{proposition}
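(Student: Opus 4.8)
The plan is to read off, from the path-ordered solution of \eqref{diff_eq}, the precise linear combination attached to a fixed boundary constant, and then to verify Chen's criterion $dB=0$ for exactly that combination. First I would write the formal solution as the path-ordered exponential acting on the boundary vector,
\[
 \vec{g}\left(\vec{x}\right) \; = \; \mathbb{P}\exp\left(\eps\int_\gamma \pmb{A}\right)\vec{C}
 \; = \; \sum\limits_{n=0}^\infty \eps^n \sum\limits_{i_1,\dots,i_n=1}^{\NL} A_{i_1}\cdots A_{i_n}\, I_{\omega_{i_1},\dots,\omega_{i_n}}\left[\gamma\right]\,\vec{C},
\]
where I used $\pmb{A}=\sum_i A_i \omega_i$ with \emph{constant} matrices $A_i$, so that the $n$-th term of the series is a depth-$n$ iterated integral with coefficient an ordered product $A_{i_1}\cdots A_{i_n}$. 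The ordering of the matrices matches the ordering convention for $I_{\omega_{i_1},\dots,\omega_{i_n}}[\gamma]$, with the latest path parameter outermost on both sides.

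Next I would substitute the expansion $\vec{C}=\sum_{j'}\eps^{j'}\vec{C}^{(j')}$ and collect the coefficient of $\eps^r$. Since the $n$-th term of the exponential carries $\eps^n$ and a given boundary constant carries $\eps^{j'}$, every iterated integral in $g_k^{(r)}(\vec{x})$ that multiplies a fixed component $C_{k'}^{(j')}$ has one and the same depth $n=r-j'$. Regarding the boundary constants as formally independent symbols, the subset isolated in the proposition is therefore the tensor-algebra element
\[
 B \; = \; \sum\limits_{i_1,\dots,i_n=1}^{\NL} \left(A_{i_1}\cdots A_{i_n}\right)_{kk'} \left[\omega_{i_1}|\dots|\omega_{i_n}\right], \qquad n=r-j'.
\]

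The core step is to apply the tensor-algebra differential to $B$ and show it vanishes. Using the definition of $d$ on words (which already incorporates $d\omega_i=0$) I would exchange the order of summation so that, for each adjacent pair of slots $(j,j+1)$, the inner sum over $i_j,i_{j+1}$ assembles the matrix-valued two-form $\sum_{i,i'} A_i A_{i'}\,\omega_i\wedge\omega_{i'}=\pmb{A}\wedge\pmb{A}$ sandwiched between the remaining factors. By the integrability condition $\pmb{A}\wedge\pmb{A}=0$ each of the $n-1$ terms drops out, so that $dB=0$; by the criterion \eqref{def_integrable_word} the combination is an integrable word, hence path-independent.

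I expect the only genuinely delicate point to be the bookkeeping that justifies peeling off a single boundary constant. One must argue that the $C_{k'}^{(j')}$ may be treated as independent parameters -- the algebraic identity $dB=0$ holds for every assignment of boundary data -- so that collecting their coefficients is well defined and automatically forces all contributing iterated integrals to share the common depth $n=r-j'$. Once this is granted, the collapse into $\pmb{A}\wedge\pmb{A}$ is purely formal and integrability finishes the argument, while the closedness $d\omega_i=0$ is precisely what makes the tensor-algebra differential take the simple contraction form used above.
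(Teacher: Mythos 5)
Your proof is correct, but it takes a genuinely different route from the paper's. The paper's proof is a two-line free-parameter argument: the full $g^{(r)}_k(\vec{x})$ is path independent for \emph{every} assignment of the boundary constants, and since the $C^{(j')}_{k'}$ are independent parameters, comparing two homotopic paths forces the coefficient of each individual constant to be path independent by itself. You instead make everything explicit: you read off from the path-ordered exponential that the combination attached to $C^{(j')}_{k'}$ is
$\sum_{i_1,\dots,i_n} \left(A_{i_1}\cdots A_{i_n}\right)_{kk'} I_{\omega_{i_1},\dots,\omega_{i_n}}\left[\gamma\right]$
with common depth $n=r-j'$, and then verify Chen's criterion $dB=0$ directly, the inner sums over adjacent index pairs collapsing slot by slot to the matrix-valued two-form $\pmb{A}\wedge\pmb{A}=0$ (which is legitimate precisely because the $A_i$ are constant matrices). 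Both arguments are sound. Yours buys extra information the paper's does not state: the explicit coefficients, the fact that all iterated integrals multiplying a fixed boundary constant share the same depth $r-j'$, and the identification of the relevant combination as the $(k,k')$ matrix element of the path-ordered exponential, whose homotopy invariance is exactly flatness. The paper's argument buys brevity and robustness -- it needs no bookkeeping and no ordering conventions. Note also that the ``delicate point'' you flag at the end (that the $C^{(j')}_{k'}$ may be treated as independent symbols) is not actually load-bearing in your own approach, where it only serves to identify which combination the proposition refers to; it is, however, the entire content of the paper's proof.
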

\begin{proof}
The proof is rather simple: The iterated integrals appearing in $g^{(r)}_k(\vec{x})$ are path independent for any
value of the boundary constants, therefore the linear combination proportional to one particular boundary constant 
$C^{(j')}_{k'}$ must be path independent by itself.
\end{proof}
We illustrate this with a simple example:
Consider the $\eps$-factorised differential equation
\bea
 d 
 \left( \begin{array}{c}
 g_1\left(\vec{x},\eps\right) \\
 g_2\left(\vec{x},\eps\right) \\
 \end{array} \right)
 & = &
 \eps
 \left( \begin{array}{cc}
 \omega_{11} & 0 \\
 \omega_{21} & \omega_{22} \\
 \end{array} \right)
 \left( \begin{array}{c}
 g_1\left(\vec{x},\eps\right) \\
 g_2\left(\vec{x},\eps\right) \\
 \end{array} \right).
\eea
The connection matrix is assumed to be integrable, this translates to $\omega_{ij}$ being closed and
\bea
\label{example_integrability_condition}
 \omega_{21} \wedge \omega_{11} + \omega_{22} \wedge \omega_{21} & = & 0.
\eea
The general solution for $g_2(\vec{x},\eps)$ at order $\eps^2$ is
\bea
 g^{(2)}_2\left(\vec{x}\right)
 & = &
 C_2^{(2)}
 + C_1^{(1)} I_{\omega_{21}}\left[\gamma\right]
 + C_2^{(1)} I_{\omega_{22}}\left[\gamma\right]
 + C_1^{(0)} \left( I_{\omega_{21}, \omega_{11}}\left[\gamma\right] + I_{\omega_{22}, \omega_{21}}\left[\gamma\right] \right)
 + C_2^{(0)} I_{\omega_{22}, \omega_{22}}\left[\gamma\right].
\eea
The linear combination of iterated integrals proportional to each boundary constant $C_2^{(2)}, C_1^{(1)}, C_2^{(1)}, C_1^{(0)}$ and $C_2^{(0)}$ are individually path independent.
In particular, the linear combination of the iterated integrals proportional to $C_1^{(0)}$ 
\bea
 I_{\omega_{21}, \omega_{11}}\left[\gamma\right] + I_{\omega_{22}, \omega_{21}}\left[\gamma\right] 
\eea
is path independent. In this particular example, this can be verified directly 
from the integrability condition of eq.~(\ref{example_integrability_condition}).

Let us discuss typical applications of proposition~\ref{proposition_path_independence}: 
We say that a sub-sector is a daughter of a super-sector, if the sub-sector can be obtained from the super-sector by pinching one or more edges
of the graph of the super-sector.
\begin{figure}
\begin{center}
\includegraphics[scale=1.0]{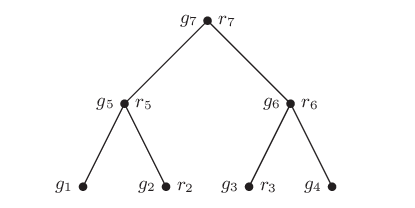}
\includegraphics[scale=1.0]{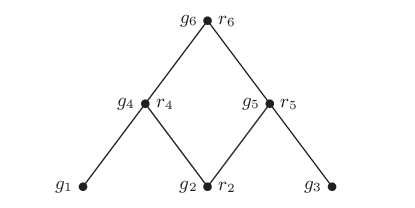}
\end{center}
\caption{
Left: A system with $7$ master integrals, where the sub-sectors corresponding to $g_5$ and $g_6$ do not have a common daughter.
Right: A system with $6$ master integrals, where the sub-sectors corresponding to $g_4$ and $g_5$ have a common daughter $g_2$.
}
\label{fig_sector_tree}
\end{figure}
Consider now a system of Feynman integrals, where the nesting of the sub-sectors is as indicated 
in the left part of fig.~\ref{fig_sector_tree}
and suppose further that some sectors involve square roots, again as indicated 
in the left part of fig.~\ref{fig_sector_tree}.
Let us assume that we are able to rationalise simultaneously the square roots $(r_2,r_5,r_7)$
as well as $(r_3,r_6,r_7)$.
However, let us assume that it is impossible to rationalise simultaneously all square roots $(r_2,r_3,r_5,r_6,r_7)$.
In this case, proposition~\ref{proposition_path_independence} can be applied: There will be no iterated integrals involving all five square roots.
Each iterated integral will involve at most three square roots, either the triple $(r_2,r_5,r_7)$ or the triple $(r_3,r_6,r_7)$.
The iterated integrals involving the triple $(r_2,r_5,r_7)$ will necessarily be proportional to $C_2^{(j)}$,
while the iterated integrals involving the triple $(r_3,r_6,r_7)$ will be proportional to $C_3^{(j)}$.
Proposition~\ref{proposition_path_independence} states that we may choose different integration paths and 
rationalise these triples of square roots independently.
Iterated integrals proportional to the other boundary constants will involve at most two square roots, either $(r_5,r_7)$ or $(r_6,r_7)$.
For the pair $(r_5,r_7)$ we may use the same rationalisation as for $(r_2,r_5,r_7)$, 
similarly we may use for the pair $(r_6,r_7)$
the same rationalisation as for $(r_3,r_6,r_7)$.
We remark that it can be advantageous to use simpler
rationalisation for the pairs of square roots.
For example, we may use for the iterated integrals proportional to $C_1^{(j)}$ or $C_5^{(j)}$ a rationalisation, which just rationalises $r_5$ and $r_7$, but not $r_2$.
This may lead to more compact expressions for these terms.

We remark that proposition~\ref{proposition_path_independence} can be useful even in the case, where all square roots are simultaneously rationalisable.
Rationalising all square roots simultaneously will introduce polynomials of high degree and in turn longer expressions in terms of multiple polylogarithms.
The terms proportional to a particular boundary constant may involve only a subset of the square roots, and rationalising just these square roots
may result in shorter expressions.

Quite often it is the case that different sub-sectors have a common daughter. An example
is shown in the right part of fig.~\ref{fig_sector_tree}.
Let us assume
that we are able to rationalise simultaneously the square roots $(r_2,r_4,r_6)$
as well as $(r_2,r_5,r_6)$, but not $(r_2,r_4,r_5,r_6)$.
Proposition~\ref{proposition_path_independence} can be applied to all terms, which are not proportional to the boundary constants $C_2^{(j)}$.
The iterated integrals proportional to $C_2^{(j)}$ will involve all four square roots.
Any iterated integral in this linear combination will involve either the triple $(r_2,r_4,r_6)$ or the triple $(r_2,r_5,r_6)$, but never all four of them.
However, the linear combination can in general not be split into two subsets 
which are individually path-independent.
Insisting in splitting the linear combination into two path independent subsets will in general require the addition and subtraction 
of compensation terms, see ref.~\cite{Kreer:2021rim}.
However, there can be an alternative: Two different parameterisations of the same integration path may rationalise either 
$(r_2,r_4,r_6)$ or $(r_2,r_5,r_6)$.
This possibility will be discussed in the next subsection.

\subsection{Parameterisation independence}

In this subsection we consider different parameterisations of the same path (we speak about the same path if
the image in $X$ is the same).
To this aim let
\bea
 \gamma_1 & : & \left[0,1\right] \rightarrow X
 \nonumber \\
 & & \lambda_1 \rightarrow \gamma_1\left(\lambda_1\right) 
\eea
and
\bea
 \gamma_2 & : & \left[0,1\right] \rightarrow X
 \nonumber \\
 & & \lambda_2 \rightarrow \gamma_2\left(\lambda_2\right) 
\eea
be two different parameterisations of the same path, i.e.
\bea
 \mathrm{Im}\left(\gamma_1\right)
 & = &
 \mathrm{Im}\left(\gamma_2\right)
\eea
We may express $\lambda_2$ in terms of $\lambda_1$ and vice versa:
\bea
 \lambda_2 \; = \; \gamma_2^{-1}\left(\gamma_1\left(\lambda_1\right)\right),
 & &
 \lambda_1 \; = \; \gamma_1^{-1}\left(\gamma_2\left(\lambda_2\right)\right).
\eea
We require in addition that
\bea
\label{eq_tangential_base_point}
 \left. \frac{d\lambda_2}{d\lambda_1} \right|_{\lambda_1=0} & = & 1.
\eea
An iterated integral is independent under these re-parameterisations of the path:
\bea
 I_{\omega_1,\dots,\omega_r}\left[\gamma_1\right]
 & = &
 I_{\omega_1,\dots,\omega_r}\left[\gamma_2\right].
\eea
The additional condition in eq.~(\ref{eq_tangential_base_point}) is required for iterated integrals with trailing zeros.

\subsection{The treatment of multiple square roots}

With these prerequisites at hand, we may now turn to our main problem: The treatment of multiple square
roots in the arguments of the dlog-forms.
We consider a situation where all differential one-forms $\omega_i$ are dlog-forms, but
some arguments of the logarithms are algebraic functions of the kinematic variables involving square
roots.

It is well-known that if all occurring square roots are simultaneously rationalisable, then the result
can be expressed in terms of multiple polylogarithms.
Being simultaneously rationalisable is a sufficient condition, but not a necessary condition.
As we advance in Feynman integral calculations we encounter more and more examples with multiple square
roots which cannot be rationalised simultaneously.
This does not necessarily mean that the Feynman integrals cannot be epxressed in terms of 
multiple polylogarithms.
Alternative methods like direct integration or symbol calculus might lead to a result in terms of 
multiple polylogarithms \cite{Heller:2019gkq,Bonetti:2020hqh}.
As the method of differential equations is the most popular method for computing Feynman integrals,
we will give below a weaker sufficient condition under which the master integrals can be expressed 
(to any order in the dimensional parameter $\eps$) in terms of multiple polylogarithms.

The first observation is that often it is impossible that all square roots appear 
in a single iterated integral.
A typical example is the situation, where a square root $r_1$ is associated with one particular sub-sector,
a second square root $r_2$ with a second sub-sector with the same number of propagators
and the two square roots appear only in these two sub-sectors.
Then we will never have an iterated integral (to any order in the dimensional parameter $\eps$)
involving both square roots $r_1$ and $r_2$, as the differential equation does not couple 
sub-sectors with the same number of propagators.

We are therefore tempted to consider different rationalisations for the subsets of square roots which do occur 
in a single iterated integral.
Within a given rationalisation we would like to use again a simple integration path (like $\gamma_{\mathrm{standard}}$) for the integration.
The complication we have to face is the following: A simple integration path in the new variables
does not necessarily correspond to the integration path in the original variables.
If we are going to use different integration paths for different iterated integrals, the question
of path-independence becomes relevant.

With our previous results, we may now state a weaker sufficient condition under which we may express
the master integrals in terms of multiple polylogarithms:
We are free to use different integration paths (and hence different rationalisations) for any 
linear combination of iterated integrals, which are by themselves path-independent.
For example, we may collect in the $\eps^r$-term of the $k$-th master integral the iterated integrals
proportional to the boundary constant $C_{k'}^{(j')}$.
For each boundary constant we may use a different integration path and therefore a different 
rationalisation.

Within each path-independent linear combination of iterated integrals we have to use the same
integration path, but we may use different parameterisations of the same integration path.
Different parameterisations of the same integration path may correspond to different rationalisations.
A typical example is the following:
Consider the integration path $\gamma_{\mathrm{standard}}$ and a transformation
\bea
\label{eq_trafo_x1}
 x_1 & = & f\left(t_1,x_2,\dots,x_{\NB}\right),
\eea
such that $x_1=0$ corresponds to $t_1=0$ and
\bea
 \left. \frac{dx_1}{dt_1} \right|_{t_1=0} & = & 1.
\eea
This will replace the variable $x_1$ by $t_1$. The transformation in eq.~(\ref{eq_trafo_x1}) may depend
on the additional variables $x_2,\dots,x_{\NB}$.
However, it does not change the integration path $\gamma_{\mathrm{standard}}$: 
The path lying in the hyperplane $x_1=0$ is not affected at all and for the last segment from
$(0,x_2,\dots,x_{\NB-1},x_{\NB})$ to 
$(x_1, x_2, \dots, x_{\NB-1}, x_{\NB})$
it is just a re-parameterisation, as $x_2, \dots, x_{\NB}$ are constant along this segment.

In the following we present a highly non-trivial example where this happens in an actual calculation:
The one-loop pentagon integral with three adjacent massive external legs
is an example which involves seven square roots.\footnote{The one-loop pentagon integral with three non-adjacent legs is simpler, as it involves five square roots, and has already been computed with the simplified differential equations approach~\cite{Syrrakos:2021nij}.}
We are not able to rationalise simultaneously all seven square roots.\footnote{We did not attempt to strictly prove that it is impossible to rationalise the seven square roots simultaneously. 
Such proofs -- if desired -- can be constructed along the lines of refs.~\cite{Besier:2020hjf,Festi:2021tyq}.}
On combinatorial grounds only three square roots will ever appear in a single iterated integral.
We show that any triple of occuring square roots can be rationalised by a suitable re-parameterisation
of the standard path $\gamma_{\mathrm{standard}}$. 
The re-parameterisations are of the form as in eq.~(\ref{eq_trafo_x1}).
This establishes that the one-loop three-mass hard pentagon integral can be expressed
to all orders in the dimensional regularisation parameter $\eps$
in terms of multiple polylogarithms.

\section{The one-loop pentagon integral with three adjacent massive external legs}
\label{sec:pentagon}

\subsection{Definition of the Feynman integral}

We consider the family of Feynman integrals shown in \figref{Fig:MassivePentagon} and given by 
\bea
 G\left[\nu_1,\nu_2,\nu_3,\nu_4,\nu_5\right] 
 =
 e^{\epsilon\gamma_E}(\mu^2)^{\nu-\frac{D}{2}}
 \int\,\frac{d^{D}l}{i\pi^{D/2}}\,\frac{1}{D_1^{\nu_1}\,D_2^{\nu_2}\,D_3^{\nu_3}\,D_4^{\nu_4}\,D_5^{\nu_5}}, \Label{Eq:PentagonIntegral}
\eea
where $l$ is the loop momentum and $D$ denotes its dimension.
Unless stated otherwise we take $D=4-2\epsilon$.  
The arbitrary scale $\mu$ is introduced to render the integral dimensionless.
Three consecutive external legs  are massive, $p_i^2=m_i^2$ with $i=1,2,3$, whereas $p_4^2=p_5^2=0$. The inverse propagators $D_i$ are
\begin{figure}
	\centering
	\includegraphics[width=0.3\textwidth]{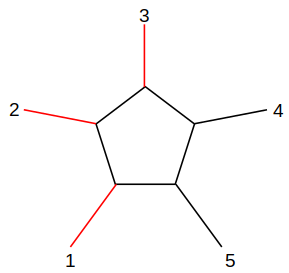}
	\caption{Pentagon topology with three adjacent massive external legs denoted in red. All other external legs and internal propagators are massless.}
	\label{Fig:MassivePentagon}
\end{figure}
\bea
&&D_1=l^2,\qquad D_2=(l-p_1)^2,\qquad D_3=(l-p_1-p_2)^2, \nn
&&D_4=(l-p_1-p_2-p_3)^2,\qquad D_5=(l-p_1-p_2-p_3-p_4)^2,
\eea
and there are eight (dimensionful) kinematic variables for this integral,
\bea
\vec{v}&=&\{s_{12},s_{23},s_{34},s_{45},s_{15},m_1^2,m_2^2,m_3^2\}, \Label{Eq:KinematicVariables1}
\eea
where $s_{i,i+1}=(p_i+p_{i+1})^2$ denotes the Mandelstam variables.
As usual, we may set $\mu^2$ without loss of generality to a specific value.
Doing so, the integral will depend on seven dimensionless kinematic variables.
For example, setting  $\mu^2=-m_1^2$, the integral will depend on the seven dimensionless kinematic variables
\bea
 \{\frac{s_{12}}{m_1^2},\frac{s_{23}}{m_1^2},\frac{s_{34}}{m_1^2},\frac{s_{45}}{m_1^2},\frac{s_{15}}{m_1^2},\frac{m_2^2}{m_1^2},\frac{m_3^2}{m_1^2}\}.
\eea
However, in this concrete example it is slightly more convenient to keep 
the dependence on the eight dimensionful kinematic variables, as we will use a twistor parameterisation for the latter.

\subsection{Master integrals}

\begin{figure}
	\centering
	\includegraphics[width=1\textwidth]{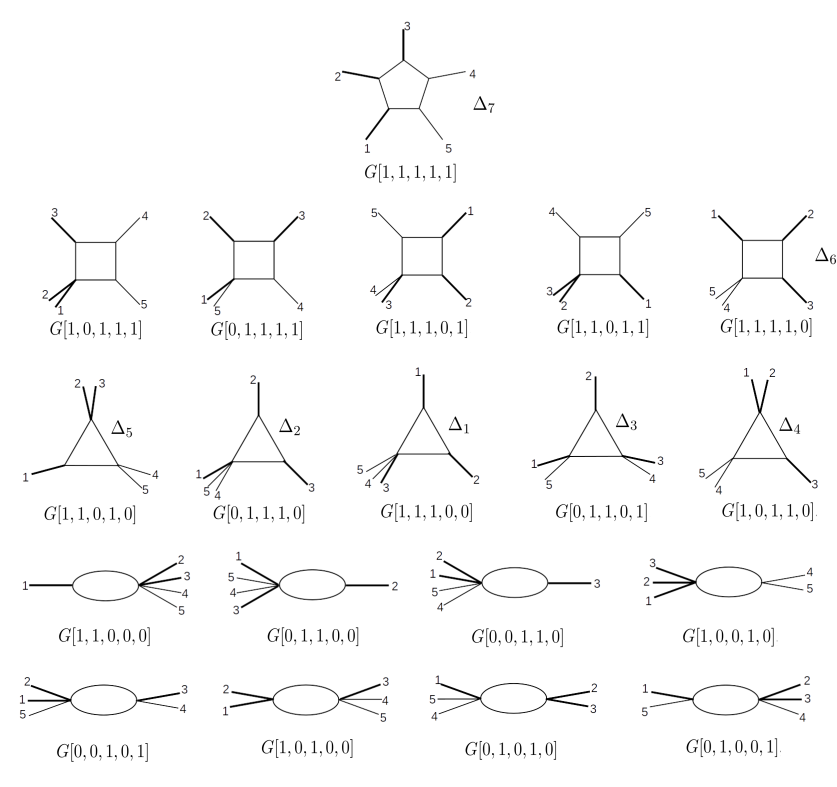}
	\caption{All master integrals of the one-loop three-mass hard pentagon topology, with thick lines denoting the massive external legs. The leading singularities of a subset thereof will depend on the square roots~(\ref{eq:squareroots}), as indicated in the figure.}
	\label{Fig:MasterIntegral}
\end{figure}

All of the Feynman integrals in the pentagon integral family $G[\nu_1,\ldots,\nu_5]$ can be reduced to a set of master integrals with the help of integration-by-parts relations. With the help of e.g. \texttt{LiteRed} \cite{Lee:2013mka}, these may be solved in terms of the 19 master integrals shown in Figure~\ref{Fig:MasterIntegral}.
As zero exponents are equivalent to removing the corresponding propagator or in other words contracting the corresponding edge, it is evident that this set of master integrals includes 8 bubble integrals, 5 triangle integrals, 5 box integrals, and 1 pentagon integral. 

The master integrals of Figure~\ref{Fig:MasterIntegral} are not of uniform transcendental (UT) weight
and do not lead to an $\eps$-factorised differential equation.
In the following we will present a basis of master integrals, which are of uniform transcendental weight.
In the case of one-loop integrals this is straightforward.
Integrals of uniform transcendental weight are expected to have constant leading singularities~\cite{Arkani-Hamed:2010pyv}.
We choose the integer dimension of the loop momentum of $n$-point 1-loop integrals to be $2\lfloor\frac{n+1}{2}\rfloor$, 
where $\lfloor x \rfloor$ denotes the integer part of $x$,
and divide these integrals by their leading singularities \cite{Abreu:2017mtm}. 
Dimensional recurrence relations~\cite{Tarasov:1996br,Lee:2009dh} relate integrals with $D$ and $D-2$ dimensions, thus if desired these can be used to re-express our basis in terms of integrals with only four integer dimensions.

We start with the bubble integrals.
It is easy to show that the bubble integral $G[0,0,1,0,1]$ has leading singularity $\frac{1}{s_{34}}$. 
Thus the integral $s_{34}G[0,0,1,0,1]$ is uniformly transcendental in $D=2-2\epsilon$, and with the help of dimensional recurrence relations we can trade this with the uniformly transcendental integral $s_{34}\, G[0,0,2,0,1]$ in $D=4-2\epsilon$. 
In the same fashion, the eight bubble integrals in our UT basis will be, 
\bea
g_1&=&\epsilon\,\frac{s_{34}}{\mu^2}\, G[0,0,2,0,1],\qquad
 g_2=\epsilon\,\frac{m_3^2}{\mu^2}\, G[0,0,2,1,0],\nn
g_3&=&\epsilon\,\frac{s_{15}}{\mu^2}\, G[0,2,0,0,1],\qquad 
g_4=\epsilon\,\,\frac{s_{23}}{\mu^2}\, G[0,2,0,1,0],\nn
g_5&=&\epsilon\,\frac{m_2^2}{\mu^2} \, G[0,2,1,0,0],\qquad 
g_6=\epsilon\,\frac{s_{45}}{\mu^2}\, G[2,0,0,1,0],\nn
g_7&=&\epsilon\,\frac{s_{12}}{\mu^2}\, G[2,0,1,0,0],\qquad 
g_8=\epsilon\,\frac{m_1^2}{\mu^2}\, G[2,1,0,0,0],
\eea
where the pre-factor $\epsilon$ ensures that the integrals are of uniform transcendental weight zero (if we count the dimensionless parameter with weight $-1$).

The external legs of each triangle integral are all massive, and the square roots shown in the first five lines of ~\eqref{eq:squareroots} below appear in the denominators of their leading singularities. 
The triangle integrals in our basis are
\bea
g_9&=&\epsilon^2\,\frac{\Delta_3}{2\mu^2}\,G[0,1,1,0,1], \qquad
g_{10}=\epsilon^2\,\frac{\Delta_2}{2\mu^2}\,G[0,1,1,1,0], \nn
g_{11}&=&\epsilon^2\,\frac{\Delta_4}{2\mu^2}\,G[1,0,1,1,0],\qquad 
g_{12}=\epsilon^2\,\frac{\Delta_5}{2\mu^2}\,G[1,1,0,1,0], \nn 
g_{13}&=&\epsilon^2\,\frac{\Delta_1}{2\mu^2}\,G[1,1,1,0,0].
\eea
Similarly, the box integrals in our basis are
\bea
g_{14}&=&\epsilon^2\,\frac{s_{23}s_{34}-m_3^2s_{15}}{2\mu^4}\,G[0,1,1,1,1], \nn 
g_{15}&=&\epsilon^2\,\frac{s_{34}s_{45}}{2\mu^4}G[1,0,1,1,1], \nn
g_{16}&=&\epsilon^2\,\frac{s_{45}s_{15}}{2\mu^4}\,G[1,1,0,1,1],\nn
g_{17}&=&\epsilon^2\,\frac{s_{15}s_{12}-m_1^2s_{34}}{2\mu^4}\,G[1,1,1,0,1], \nn 
g_{18}&=&\epsilon^2\,\frac{\Delta_6}{2\mu^4}\,G[1,1,1,1,0],\nn
\eea
where $g_{15}$ and $g_{16}$ have two massive external legs, $g_{14}$ and $g_{17}$ have three massive external legs, and $g_{18}$ has all external legs massive, with $\Delta_6$ also given in~\eqref{eq:squareroots}.

Finally, the pentagon integral has leading singularity $\frac{1}{\Delta_7}$ in $D=6-2\epsilon$, and the corresponding UT integral is
\bea
g_{19}&=&\epsilon^3\,\frac{\Delta_7}{\mu^4} G^{D=6-2\epsilon}[1,1,1,1,1],
\eea
noting in particular the integer dimension choice indicated at the beginning of this subsection. With the help of dimensional recurrence relations~\cite{Tarasov:1996br,Lee:2009dh} we can always write this integral as a linear combination of integrals in $D=4-2\epsilon$, hence this merely corresponds to a convenient notation.

Finally, the square roots appearing in the above formulas are given by 
\bea\label{eq:squareroots}
\Delta_1&=&\sqrt{m_1^4-2m_1^2m_2^2+m_2^4-2m_1^2s_{12}-2m_2^2s_{12}+s_{12}^2}, \nn
\Delta_2&=& \sqrt{m_2^4-2m_2^2m_3^2+m_3^4-2m_2^2s_{23}-2m_3^2 s_{23}+s_{23}^2}, \nn
\Delta_3&=&\sqrt{m_2^4-2m_2^2 s_{15} + s_{15}^2-2m_2^2 s_{34}-2s_{15}s_{34}+s_{34}^2}, \nn
\Delta_4&=&\sqrt{m_3^4-2m_3^2 s_{12}+s_{12}^2-2m_3^2 s_{45}-2s_{12} s_{45}+s_{45}^2}, \nn
\Delta_5&=&\sqrt{m_1^4-2m_1^2 s_{23}+s_{23}^2-2m_1^2 s_{45}-2s_{23}s_{45}+s_{45}^2}, \nn 
\Delta_6&=&\sqrt{m_1^4m_3^4-2m_1^2 m_3^2 s_{12}s_{23}+s_{12}^2 s_{23}^2-2m_1^2 m_2^2m_3^2s_{45}-2m_2^2 s_{12}s_{23}s_{45}+m_2^4 s_{45}^2},  \nn
\Delta_7&=&\sqrt{\Lambda(p_1,p_2,p_3,p_4)}=\sqrt{\det(-p_i\cdot p_j)}.
\eea
which unfortunately cannot be rationalized simultaneously \cite{Besier:2020hjf,Weinzierl:2022eaz}.

\subsection{The alphabet}
\label{sec:alphabet}

The differential equation for the basis $\vec{g}=\{g_1,...,g_{19}\}$ has the $\eps$-factorised form
\bea
d\,\vec{g}=\epsilon\,\pmb{A}(\vec{v})\, \vec{g}\,,\quad \pmb{A}(\vec{v})=\Sl_{i=1}^{57}\,A_i\,d\log\left(\frac{W_i}{\left(\mu^2\right)^{\alpha_i}}\right)\,, \Label{Eq:DiffEquation}
\eea
where the matrix $\pmb{A}$ is independent of $\epsilon$, and its dependence on $\vec{v}$ comes only through the \emph{letters} $W_i$, whereas their coefficients $A_i$ are constant matrices. 
The exponent $\alpha_i$ is chosen such that $\frac{W_i}{\left(\mu^2\right)^{\alpha_i}}$ is dimensionless.
To make contact with the previous section we set
\bea
 \omega_i & = & d\log\left(\frac{W_i}{\left(\mu^2\right)^{\alpha_i}}\right).
\eea
For this integral family we have 57 letters in the differential equation. We will provide their explicit form in a moment, but together with $\pmb{A}$ these may also be found in the ancillary file attached to the arxiv version of this article,
see appendix~\ref{sect:supplement} for details.
We set
\bea
 {\mathcal A}
 & = &
 \left\{ \omega_1, \dots, \omega_{57}\right\}.
\eea
Expressed in terms of the kinematic variables $v_i$, the total differential in the above equation takes the form
\bea
d=\Sl_{i=1}^{8}\,dv_i\,\frac{\partial}{\partial v_i}.
\eea
The 57 letters $\{W_i\}$ appearing in the matrix $\pmb{A}$ of the differential equation (\ref{Eq:DiffEquation}) may be obtained from the general results for one-loop alphabets worked out in~\cite{Abreu:2017mtm, Chen:2022fyw,Dlapa:2023cvx}. In particular, by taking the limit of the generic to the three-mass hard pentagon integral with the help of the ancillary file of~\cite{Dlapa:2023cvx}, and eliminating any multiplicative dependence, we arrive at the following alphabet: 
\bea
W_1&=&-m_1^2,\qquad~~~  W_2=-m_2^2,\qquad~~~ 
W_3=-m_3^2,\qquad~~~ W_4=-s_{12},\nn
W_5&=&m_1^4-2 m_2^2 m_1^2-2 s_{12} m_1^2+m_2^4+s_{12}^2-2 m_2^2 s_{12},\qquad~
W_6=m_1^2-s_{15},\nn
W_7&=&-s_{15},\qquad~~~
W_8=s_{15}-s_{23},\quad~
W_9=-s_{23},\qquad~~~ 
W_{11}=m_3^2-s_{34},\nn
W_{10}&=&m_2^4-2 m_3^2 m_2^2-2 s_{23} m_2^2+m_3^4+s_{23}^2-2 m_3^2 s_{23},\qquad~
W_{12}=s_{12}-s_{34},\nn
W_{13}&=&s_{34},\qquad~~~~~
W_{14}=s_{12} s_{15}-m_1^2 s_{34},\qquad\qquad
W_{15}=m_3^2 s_{15}-s_{23} s_{34},\nn
W_{16}&=&m_2^4-2 s_{15} m_2^2-2 s_{34} m_2^2+s_{15}^2+s_{34}^2-2 s_{15} s_{34}, \nn
W_{17}&=&-s_{34} m_1^4-s_{34}^2 m_1^2+s_{12} s_{15} m_1^2+m_2^2 s_{34} m_1^2+s_{12} s_{34} m_1^2+s_{15} s_{34} m_1^2-s_{12} s_{15}^2 \nn
&&~~~-m_1^2 m_2^2 s_{12}-s_{12}^2 s_{15}+m_2^2 s_{12} s_{15}-m_2^2 s_{15} s_{34}+s_{12} s_{15} s_{34}, \nn
W_{18}&=&-s_{15} m_3^4-s_{15}^2 m_3^2+m_2^2 s_{15} m_3^2-m_2^2 s_{23} m_3^2+s_{15} s_{23} m_3^2+s_{15} s_{34} m_3^2+s_{23} s_{34} m_3^2 \nn
&&~~~-s_{23} s_{34}^2-s_{23}^2 s_{34}-m_2^2 s_{15} s_{34}+m_2^2 s_{23} s_{34}+s_{15} s_{23} s_{34}, \nn
W_{20}&=&s_{15}^2-m_1^2 s_{15}-s_{23} s_{15}+s_{45} s_{15}+m_1^2 s_{23},\qquad
W_{19}=-s_{45},\nn
W_{21}&=&s_{34}^2-m_3^2 s_{34}-s_{12} s_{34}+s_{45} s_{34}+m_3^2 s_{12}, \nn
W_{22}&=&m_3^4-2 s_{12} m_3^2-2 s_{45} m_3^2+s_{12}^2+s_{45}^2-2 s_{12} s_{45},\nn
W_{23}&=&m_1^4-2 s_{23} m_1^2-2 s_{45} m_1^2+s_{23}^2+s_{45}^2-2 s_{23} s_{45},\nn
W_{24}&=&\frac{m_1^2+m_2^2-s_{12}-\Delta_1}{m_1^2+m_2^2-s_{12}+\Delta_1},\qquad\qquad\quad
W_{25}=\frac{m_1^2-m_2^2+s_{12}-\Delta_1}{m_1^2-m_2^2+s_{12}+\Delta_1},\nn
W_{26}&=&\frac{m_2^2+m_3^2-s_{23}-\Delta_2}{m_2^2+m_3^2-s_{23}+\Delta_2},\qquad\qquad\quad
W_{27}=\frac{m_2^2-m_3^2+s_{23}-\Delta_2}{m_2^2-m_3^2+s_{23}+\Delta_2},\nn
W_{28}&=&\frac{R_{28}-s_{34} m_1^2 \Delta_1+s_{12} s_{15} \Delta_1}{R_{28}+s_{34} m_1^2 \Delta_1-s_{12} s_{15} \Delta_1},\qquad
W_{29}=\frac{R_{29}+s_{15} m_3^2\Delta_2-s_{23} s_{34} \Delta_2}{R_{29}-s_{15} m_3^2\Delta_2+s_{23} s_{34} \Delta_2},\nn
W_{30}&=&\frac{m_2^2+s_{15}-s_{34}-\Delta_3}{m_2^2+s_{15}-s_{34}+\Delta_3},\qquad\qquad~~~
W_{31}=\frac{m_2^2-s_{15}+s_{34}-\Delta_3}{m_2^2-s_{15}+s_{34}+\Delta_3},\nn
W_{32}&=&\frac{R_{32}+s_{12} \Delta_3 s_{15}-m_1^2 s_{34} \Delta_3}{R_{32}-s_{12} \Delta_3 s_{15}+m_1^2 s_{34} \Delta_3},\quad~~~
W_{33}=\frac{R_{33}+m_3^2 \Delta_3 s_{15}-s_{23} s_{34} \Delta_3}{R_{33}-m_3^2 \Delta_3 s_{15}+s_{23} s_{34} \Delta_3},\nn
W_{34}&=&\frac{m_3^2 s_{12} s_{15}^2-m_1^2 m_3^2 s_{34} s_{15}-s_{12} s_{23} s_{34} s_{15}+m_2^2 s_{34} s_{45} s_{15}+m_1^2 s_{23} s_{34}^2}{R_{34}}, \nn
W_{35}&=&\frac{R_{35}}{m_1^4 m_3^4-2 m_1^2 s_{12} s_{23} m_3^2-2 m_1^2 m_2^2 s_{45} m_3^2+s_{12}^2 s_{23}^2+m_2^4 s_{45}^2-2 m_2^2 s_{12} s_{23} s_{45}},\nn
W_{36}&=&\frac{m_3^2+s_{12}-s_{45}-\Delta_4}{m_3^2+s_{12}-s_{45}+\Delta_4},\qquad\quad~~~~~
W_{37}=\frac{m_3^2-s_{12}+s_{45}-\Delta_4}{m_3^2-s_{12}+s_{45}+\Delta_4},\nn
W_{38}&=&\frac{m_1^2+s_{23}-s_{45}-\Delta_5}{m_1^2+s_{23}-s_{45}+\Delta_5},\qquad\quad~~~~~
W_{39}=\frac{m_1^2-s_{23}+s_{45}-\Delta_5}{m_1^2-s_{23}+s_{45}+\Delta_5},\nn
W_{40}&=&\frac{-2 m_3^2 s_{12}+s_{34} s_{12}+m_3^2 s_{34}-s_{34} s_{45}-s_{34} \Delta_4}{-2 m_3^2 s_{12}+s_{34} s_{12}+m_3^2 s_{34}-s_{34} s_{45}+s_{34} \Delta_4}, \nn
W_{41}&=&\frac{m_1^2 s_{15}+s_{23} s_{15}-s_{45} s_{15}-\Delta_5 s_{15}-2 m_1^2 s_{23}}{m_1^2 s_{15}+s_{23} s_{15}-s_{45} s_{15}+\Delta_5 s_{15}-2 m_1^2 s_{23}},\nn
W_{42}&=&\frac{m_1^2 m_3^2+s_{12} s_{23}-m_2^2 s_{45}-\Delta_6}{m_1^2 m_3^2+s_{12} s_{23}-m_2^2 s_{45}+\Delta_6},\qquad
W_{43}=\frac{m_1^2 m_3^2-s_{12} s_{23}+m_2^2 s_{45}-\Delta_6}{m_1^2 m_3^2-s_{12} s_{23}+m_2^2 s_{45}+\Delta_6},\nn
W_{44}&=&\frac{R_{44}-\Delta_1 \Delta_6}{R_{44}+\Delta_1 \Delta_6},\qquad
W_{45}=\frac{R_{45}-\Delta_2 \Delta_6}{R_{45}+\Delta_2 \Delta_6},\qquad~
W_{46}=\frac{R_{46}-\Delta_4 \Delta_6}{R_{46}+\Delta_4 \Delta_6},\nn
W_{47}&=&\frac{R_{47}-\Delta_5 \Delta_6}{R_{47}+\Delta_5 \Delta_6},\qquad
W_{48}=\frac{R_{48}+\Delta_7}{-R_{48}+\Delta_7},\qquad~~
W_{49}=\frac{R_{49}-\Delta_1 \Delta_7}{R_{49}+\Delta_1 \Delta_7},\nn
W_{50}&=&\frac{R_{50}-\Delta_7 s_{15}}{R_{50}+\Delta_7 s_{15}},\qquad
W_{53}=\frac{R_{53}-\Delta_2 \Delta_7}{R_{53}+\Delta_2 \Delta_7},\qquad
W_{54}=\frac{R_{54}-\Delta_3 \Delta_7}{R_{54}+\Delta_3 \Delta_7},\nn
W_{51}&=&\frac{R_{51}+\Delta_7 m_1^2-s_{15} \Delta_7}{R_{51}-\Delta_7 m_1^2+s_{15} \Delta_7},\qquad
W_{52}=\frac{R_{52}+\Delta_7 s_{15}-s_{23} \Delta_7}{R_{52}-\Delta_7 s_{15}+s_{23} \Delta_7},\nn
W_{55}&=&\frac{R_{55}-\Delta_4 \Delta_7}{R_{55}+\Delta_4 \Delta_7},\qquad
W_{56}=\frac{R_{56}-\Delta_5 \Delta_7}{R_{56}+\Delta_5 \Delta_7},\qquad
W_{57}=\frac{R_{57}-\Delta_6 \Delta_7}{R_{57}+\Delta_6 \Delta_7}, \Label{Eq:PentagonLetter}
\eea
where the $R_i$ denote large polynomials in the kinematic variables, whose precise form has been relegated to  \appref{sec:R-termsInLetter}. 

The seven roots $\Delta_1-\Delta_7$ are associated to specific sub-sectors, in particular the roots $\Delta_1-\Delta_5$
are associated to the triangle diagrams.
From the block triangular structure of the matrix $\pmb{A}$ it follows, that the roots $\Delta_1-\Delta_5$ can never occur simultaneously 
in any given iterated integral.
Furthermore, as the triangle $G[0,1,1,0,1]$ is not a sub-sector of the box integral $G[1,1,1,1,0]$ it follows that the roots
$\Delta_3$ and $\Delta_6$ can never 
occur simultaneously 
in any given iterated integral.
Therefore, the only sub-sets of roots which will appear in any given iterated integral are
\bea
&&(\Delta_1,\Delta_6,\Delta_7),\quad (\Delta_2,\Delta_6,\Delta_7),\quad
(\Delta_4,\Delta_6,\Delta_7),\quad (\Delta_5,\Delta_6,\Delta_7), \nn
&&(\Delta_1,\Delta_7),\quad (\Delta_2,\Delta_7),\quad (\Delta_3,\Delta_7),\quad (\Delta_4,\Delta_7), \quad (\Delta_5,\Delta_7),\quad (\Delta_6,\Delta_7), \nn
&&
 (\Delta_1,\Delta_6),\quad (\Delta_2,\Delta_6),\quad
 (\Delta_4,\Delta_6),\quad (\Delta_5,\Delta_6),\qquad
 (\Delta_1),\quad~ ...\quad~ (\Delta_7).
\eea

\section{Conversion to multiple polylogarithms}\label{sec:IntegrationAlgorithm}
\label{sect:conversion}

With the $\eps$-factorised differential equation~(\ref{Eq:DiffEquation})
at hand we may express the master integrals $\vec{g}$
in terms of iterated integrals
\bea
 I_{\omega_{i_1},\dots,\omega_{i_r}}\left[\gamma\right],
\eea
where $\omega_{i_j} \in {\mathcal A}$ and $\gamma$ is a path from a chosen boundary point to the desired point
in kinematic space.
In this section we show that all iterated integrals at any weight 
can be expressed algorithmically in terms of multiple polylogarithms.
We do this in two steps.
In the first step we perform a change of variables from our original kinematic variables $\vec{v}$
to a new set of variables obtained from a momentum twistor parameterization.
This will rationalise three of the seven square roots.
We then fix an integration path.
The remaining non-rationalised square roots are associated with triangle integrals and in any given iterated integral
there will occur at most a single non-rationalised square root.
In the second step we use different re-parameterisations of the same integration path to rationalise
the remaining square roots.

\subsection{Step 1: Momentum twistor parameterization}

In the first step we express with the help of momentum twistors (see \appref{sec:MomentumTwistorParametrization} for details)
the kinematic variables
$\vec{v}=\{s_{12},s_{23},s_{34},s_{45},s_{15},m_1^2,m_2^2,m_3^2\}$ in terms of new variables  $\pmb{x}=\{x_1,x_2,...,x_8\}$.
The transformation is given by
\bea
m_1^2&=&\frac{\mu^2}{x_4}, \nn
m_2^2&=&-\frac{x_1 (1 + x_6)}{(-1 + x_1) x_6 -x_1 (-1 + x_2) x_7} \mu^2, \nn
m_3^2&=&-\frac{N_3}{(x_5-x_7)\bigl[x_1 (-1 + x_2) x_7-(-1 + x_1) x_6\bigr]\bigl[x_4 + (-1 + x_2) x_8\bigr] \bigl[-1 +x_7-x_6 (1 + x_8)\bigr]} \mu^2, \nn
s_{12}&=&\frac{\bigl[-1 + x_1 (1 + x_4)\bigr](-1 + x_7)}{x_4 \bigl[(-1 + x_1) x_6-x_1 (-1 + x_2) x_7\bigr]} \mu^2,  \nn
s_{23}&=&-\frac{(-1+x_5-x_6) (1+x_6-x_7) (1 + x_8)^2}{(x_5-x_7)\bigl[x_4 + (-1 + x_2) x_8\bigr]\bigl[-1 + x_7 -x_6 (1 + x_8)\bigr]} \mu^2, \nn
s_{34}&=&-\frac{N_{34}}{x_4(x_5 - x_7)\bigl[x_1 (-1 + x_2) x_7-(-1 + x_1) x_6\bigr] \bigl[-1 + x_7 - x_6 (1 + x_8)\bigr]} \mu^2,  \nn        
s_{45}&=&\frac{(-1+x_7)x_8\bigl[-(-1+x_2)(x_5-x_7)x_8 - 
    x_4\bigl(1-x_7+x_8-x_5 x_8+x_6(1+x_8)\bigr)\bigr]}{x_4 (x_5 - x_7)\bigl[x_4 + (-1 + x_2) x_8\bigr]\bigl[-1+x_7-x_6 (1+x_8)\bigr]} \mu^2,  \nn
s_{15}&=&\frac{(x_3-x_5) (1+x_6-x_7)(1+x_8)}{x_4 (x_5-x_7)\bigl[-1 + x_7 - x_6 (1 + x_8)\bigr]} \mu^2,
\Label{Eq:KinVariableTransform-1}
\eea
where $N_3$ and $N_{34}$ are shorthands for
\bea
N_3&=&(-1 +x_7)\Bigl[(-1 + x_5 - x_6) (1 + x_6 - x_7) (1 + x_8)  \nn
&&~~~~~~~~~~~~~~~~+x_1(1 + x_6) (1 + x_8 + x_6 (1 + x_8) + 
              x_7 (-1 + (-2 + x_2 - x_4) x_8)) \nn 
&&~~~~~~~~~~~~~~~~-x_1 x_5 \bigl[1 + x_6 - x_7 + x_2 x_8 + x_2 x_6 x_8 - 
              x_7 x_8 + x_4 \bigl(1 + x_6 - x_7 (1 + x_8)\bigr)\bigr]\Bigr], \nn
N_{34}&=&(-1+x_7)\Bigl[-\bigl[-1 + x_1 (1 + x_4)\bigr] \bigl[(1 + x_6) x_7 x_8 + 
             x_5\bigl(1 + x_6 - x_7 (1 + x_8)\bigr)\bigr]  \nn
&&~~~~~~~~~~~~~~~~+x_3 (-1 - x_6 + x_7 - x_8 + x_5 x_8 - x_6 x_8)  \nn
&&~~~~~~~~~~~~~~~~+ x_1 x_3 \bigl[1 + x_8 - x_2 x_5 x_8 + x_6 (1 + x_8) + x_7\bigl(-1 + (-1 + x_2) x_8\bigr)\bigr]\Bigr]. 
\eea
It is easily checked that this defines for generic values of $\vec{v}$ (or $\pmb{x}$) an invertible transformation.
The above transformation has the following properties:
\begin{enumerate}
\item The square roots $\Delta_5$, $\Delta_6$ and $\Delta_7$ are rationalized.
\item The remaining square roots $\Delta_1$ - $\Delta_4$ are of degree $2$ in $x_1$.
\item Restricted to the hypersurface $x_1=0$, the remaining square roots $\Delta_1$ - $\Delta_4$ are rationalized on this hypersurface.
\end{enumerate}
We now fix the integration path to be the piecewise smooth path $\gamma=\gamma_{\mathrm{standard}}$ consisting of a straight line from 
$(0,0,\dots,0,0)$ to $(0,0,\dots,0, x_8)$, followed by a straight line from 
$(0,0,\dots,0,x_8)$ to $(0,0,\dots,x_7,x_8)$, and the pattern continues in this way.
The last segment is given by straight line from $(0,x_2,\dots,x_7,x_8)$ to 
$(x_1, x_2, \dots, x_7, x_8)$.
Due to property $3$ from the list above we will encounter the remaining square roots only on the last segment from
$(0,x_2,\dots,x_7,x_8)$ to 
$(x_1, x_2, \dots, x_7, x_8)$.
The integration over all other segments will yield multiple polylogarithms in a straightforward way.

\subsection{Step 2: Re-parameterisation of the integration path}

It remains to consider the last integration segment
from $(0,x_2,\dots,x_7,x_8)$ to $(x_1, x_2, \dots, x_7, x_8)$.
Our task is to convert the iterated integrals along this segment to multiple polylogarithms.
The remaining non-rationalized square roots are $\Delta_1, \Delta_2, \Delta_3$ and $\Delta_4$.
These square roots are associated to triangle sub-sectors and from discussion at the end of 
section~\ref{sec:alphabet} it follows that there will be never two or more of them in any given iterated
integral.
Hence, any given iterated integral will have at most one square root.
The variables $x_2, \dots, x_8$ are constant along the last integration segment.
This implies in particular, that square roots, which only depend on the variables $x_2, \dots, x_8$,
but not on $x_1$ are unproblematic.
In the variables $\pmb{x}$ the square roots $\Delta_1-\Delta_4$ factor into square roots of perfect squares
(which are un-problematic) and a square roots of a degree $2$ polynomial in $x_1$.
The last square root can always be rationalized by a re-definition of the $x_1$-variable.
As we do not change $x_2, \dots, x_8$, which are kept as constant parameters, this corresponds to a
re-parameterisation of the integration path.

Let us assume that we would like to rationalize the square root
\bea
\label{example_quadratic_square_root}
 \sqrt{ \left(x_1-a\right)\left(x_1-b\right)},
\eea
where $a$ and $b$ are functions of the variables $x_2, \dots, x_8$ and may contain square roots which only
depend on $x_2, \dots, x_8$.
The substitution
\bea
\label{substitution_x_to_t}
 x_1 & = & \frac{b t \left[ \left(a-b\right)t + 4 a b\right]}{\left(a-b\right)t^2 + 4 a b t + 4 a b^2}
\eea
rationalizes this square root with respect to the new variable $t$.
The transformation will produce $\sqrt{ab}$, but this is un-problematic, as this quantity depends only 
on $x_2, \dots, x_8$.
The transformation in eq.~(\ref{substitution_x_to_t}) has the property that $t=0$ corresponds to $x_1=0$ and
\bea
 \left. \frac{\partial x_1}{\partial t} \right|_{t=0}& = & 1.
\eea
Expressing $t$ as a function of $x_1$ we have to choose the sign of a square root.
We always choose the sign such that $x_1=0$ corresponds to $t=0$.
For the case at hand we have four remaining square roots of the type as in eq.~(\ref{example_quadratic_square_root})
and at most only one square root occurs in any given iterated integral.
We introduce four new variables $t_1$, $t_2$, $t_3$ or $t_4$ in the way discussed above 
such that $t_i$ rationalizes the square root $\Delta_i$.
Hence, we obtain five different parametrizations of the same integration path, which we will use as shown in table~\ref{table_parameterizations}.
\begin{table}
\label{table_parameterizations}
\begin{center}
\begin{tabular}{c|l}
 square roots & variables \\
 \hline
 no square root & $\{x_1,x_2,\dots,x_8\}$ \\
 $\Delta_1$ & $\{t_1,x_2,\dots,x_8\}$ \\
 $\Delta_2$ & $\{t_2,x_2,\dots,x_8\}$ \\
 $\Delta_3$ & $\{t_3,x_2,\dots,x_8\}$ \\
 $\Delta_4$ & $\{t_4,x_2,\dots,x_8\}$ \\
\end{tabular}
\end{center}
\caption{Parameterizations of the (same) integration path.
If no square roots occurs in an iterated integral, the first parameterization is used.
If the square root $\Delta_i$ occurs, the parameterization with the variable $t_i$ is used.}
\end{table}
The explicit expressions for the transformations from $x_1$ to the $t_i$'s are rather long and
may be found in the ancillary file attached to the arxiv version of this article.
By using the appropriate parameterization we ensure that we only encounter dlog-forms with rational arguments, 
hence all iterated integrals can be expressed in terms of multiple polylogarithms.

\section{Results}
\label{sect:results}

With the methods discussed above we have expressed up to weight four all master integrals for
the one-loop pentagon integral with three adjacent massive external legs
in terms of multiple polylogarithms. 
These include the lower-point integrals whose analytic $\varepsilon$-expansions have been discussed extensively in the literature.
We note that the lower-point integrals include in particular 
the 1-loop box with all external legs massive. 
For simplicity we focus on the Euclidean region.
The boundary values have been obtained with the help of the \texttt{AMFlow} package \cite{Liu:2022chg,Liu:2017jxz,Liu:2022mfb}
and the PSLQ algorithm \cite{Ferguson:1979,Ferguson:1992,Ferguson:1999,Bailey:1999nv}.
The explicit expressions for the master integrals in terms of multiple polylogarithms are rather long
and given up to weight four in the ancillary file attached to the arxiv version of this article.

As a cross-check we evaluate the master integrals at a point $\vec{v}_0$ given by
\begin{alignat}{4}
m_1^2&=-\frac{50}{51}&\quad& 
m_2^2=-\frac{175}{3798}&\quad&
m_3^2=-\frac{1102585325}{883407397698}&\quad&  
s_{12}=-\frac{67855}{129132},  \nn
s_{23}&=-\frac{55296000}{232598051}&\quad&  
s_{34}=-\frac{795086350}{2944306449}&\quad&
s_{45}=-\frac{1898617750}{11862500601}&\quad& 
s_{15}=-\frac{4800}{12019}. \Label{Eq:BoundaryPoint-1}
\end{alignat}
In terms of the momentum twistor variables defined in appendix~\ref{sec:MomentumTwistorParametrization}, this point corresponds to
\begin{alignat}{4}
x_1&=-\frac{7}{20} &\qquad& x_2=-\frac{3}{5} &\qquad& x_3=-\frac{19}{50} &\qquad& x_4=-\frac{51}{50}, \nn
x_5&=\frac{11}{5} &\qquad& x_6=-\frac{6}{5} &\qquad& x_7=\frac{9}{50} &\qquad& x_8=95, \Label{Eq:BoundaryPoint-2}
\end{alignat}
and $\mu^2=1$.
The numerical results for $g_{18}$ and $g_{19}$ up to weight four read
\bea
g_{18}&=&
7.8968007697753055...\,\epsilon^2+48.74234942348662...\,\epsilon^3+166.59728539785323...\,\epsilon^4 
\nn
g_{19}&=&
-1.2884632786357...\,\epsilon^3-6.1292127077565...\,\epsilon^4.
\eea
We verified that our result agrees with \texttt{AMFlow}.

\section{Conclusions}
\label{sec:conclusions}

In this work we considered families of Feynman integrals, which have an
$\varepsilon$-factorised differential equation which contains only dlog-forms with algebraic arguments
and where the algebraic part is given by (multiple) square roots.
These families occur frequently in precision calculations.
It is well-known that if all square roots are simultaneously rationalisable, the Feynman integrals can be expressed
in terms of multiple polylogarithms. This is a sufficient, but not a necessary criterium.
In this paper we presented weaker requirements.
We review path independence and parameterisation independence of iterated integrals.
In the context of Feynman integrals we may divide the full result into smaller path-independent subsets,
and we may use different rationalisations in different subsets.
Subsets, which are naturally path-independent, are characterised by proposition~\ref{proposition_path_independence}.
Secondly, any iterated integral is invariant under re-parameterisation of the same integration path.
Hence, we may use within the same subset different rationalisations, if they correspond to different parameterisations of the same
integration path.
We presented a non-trivial example, the one-loop pentagon function with three adjacent massive external legs involving seven square roots,
where this technique can be used to express the result in terms of multiple polylogarithms.
The technique will be useful for other Feynman integrals as well. For example, we expect that it will be useful in an attempt 
to prove that any 1-loop generic Feynman integral evaluates to multiple polylogarithms to all orders in the dimensional regulator, starting from the known canonical differential equations for these integrals~\cite{Abreu:2017mtm, Chen:2022fyw,Dlapa:2023cvx}. Furthermore, this approach may potentially be extended to higher-loop cases \cite{Abreu:2020jxa,Henn:2021cyv,Abreu:2024yit,Abreu:2024fei,Henn:2025xrc}.

\section*{Acknowledgements}
The work of GP was supported by the UK Science and Technology Facilities Council grant ST/Z001021/1. KW is very grateful for the hospitality provided by the Institut für Physik, Mainz. KW is supported by the Helmholtz-OCPC International Postdoctoral Exchange Fellowship Program. 
YZ thanks the support from the NSF of China through Grant No. 12075234, 12247103, and 12047502. This research was supported in part by the Munich Institute for Astro-, Particle and BioPhysics (MIAPbP) which is funded by the Deutsche Forschungsgemeinschaft (DFG, German Research Foundation) under Germany´s Excellence Strategy – EXC-2094 – 390783311. We  also thank the Galileo Galilei Institute for Theoretical Physics for the hospitality and the INFN for partial support during the completion of this work.

\appendix

\section{Momentum twistor parametrization}\label{sec:MomentumTwistorParametrization}

In this appendix we provide details on the momentum twistor parametrization \cite{Hodges:2005bf,Hodges:2005aj,Hodges:2009hk,Mason:2009sa,Weinzierl:2016bus,Gehrmann:2018yef}. 
The five external momenta of the pentagon integral shown in \figref{Fig:MassivePentagon} can be expressed by eight massless momenta
\bea
p_1=q_1+q_2,\quad p_2=q_3+q_4,\quad p_3=q_5+q_6,\quad p_4=q_7,\quad p_5=q_8.
\eea
where the $p_i$'s denote the original external momenta of the pentagon integral, while the $q_i$'s denote massless momenta: $q_i^2=0$ ($i=1,...,8$). 
For the momenta $q_i$ we introduce the dual coordinates as $y_{i+1}-y_i=q_i$ with $y_{i+8}=y_i$,\footnote{The standard notation of dual coordinates is $x_i$, but we use $y_i$ to avoid the confusion with the free variables in $Z$.}
The eight kinematic variables in $\vec{v}$ of the pentagon integral (\ref{Eq:PentagonIntegral}) can then be expressed as follows: 
\bea
s_{12}&=&y^2_{1,5},\quad s_{23}=y_{3,7}^2,\quad s_{34}=y_{5,8}^2, \quad s_{45}=y_{1,7}^2, \nn
s_{15}&=&y^2_{3,8},\quad m_1^2=y_{1,3}^2,\quad m_2^2=y_{3,5}^2,\quad m_3^2=y_{5,7}^2,
\eea
where $y_{i,j}=y_j-y_i$. 
Although we work with dimensional regularisation,
the external momenta lie always in a four-dimensional sub-space.
We may therefore use eight momentum twistor variables $Z_i=(\lambda_i,\mu_i)$ to encode the corresponding $q_i$ ($i=1,...,8$).
The $y_{i,j}^2$ have the following expressions in terms of the momentum twistor variables:
\bea
\label{y2_in_twistors}
y_{i,j}^2=\frac{\langle i-1 i j-1 j\rangle}{\langle i-1\,i\, I_{\infty}\rangle\,\langle j-1\,j\,I_{\infty}\rangle} \mu^2.
\eea
Here, the four bracket $\langle ijkl \rangle$ is defined as the determinant of four momentum twistors. Explicitly it is given by
\bea
\langle ijkl\rangle=\epsilon^{ijkl}Z_iZ_jZ_kZ_l.
\eea
The $I_{\infty}$ refers to two auxiliary momentum twistors $Z_9$ and $Z_{10}$: $I_{\infty}=(Z_9,Z_{10})$. 
It is convenient to introduce the scale $\mu$ in eq.~(\ref{y2_in_twistors}), this ensures that the twistors are dimensionless. 
The momentum twistor parameterization is highly redundant and we are allowed to make specific choices to reduce 
this redundancy.
Specifically, we choose the following 
momentum twistor parametrization in our calculation
\bea
\left(Z_1,Z_2,Z_3,Z_4,Z_5,Z_6,Z_7,Z_8,Z_9,Z_{10}\right)=\left(
\begin{array}{cccccccccc}
 1 & 0 & 0 & 0 & x_6+1 & 1 & x_5 & x_3 & 1 & 0 \\
 0 & 1 & 0 & \frac{1}{x_1} & x_2 & z_{2,6} & 1 & x_4+1 & 0 & 1 \\
 0 & 0 & 1 & x_7 & 1 & z_{3,6} & z_{3,7} & x_7 & 1 & 0 \\
 0 & 0 & 0 & 1 & 1 & 1 & 1 & 1 & 0 & 1 \\
\end{array}
\right), \Label{Eq:MomentumTwistorParametrization}
\eea
where
\bea
z_{2,6}&=&\frac{x_2 x_5 x_8-x_2 x_8+x_4 x_5-x_4+x_5-x_6 x_8-x_6-1}{(x_8+1) (x_5-x_6-1)}, \nn
z_{3,6}&=&-\frac{x_6 x_7 x_8+x_6 x_7-x_7^2-x_7 x_8+x_7+x_8}{(x_8+1) (-x_6+x_7-1)}, \nn
z_{3,7}&=&-\frac{-x_5 x_7 x_8+x_5 x_8+x_6 x_7 x_8+x_6 x_7-x_7^2+x_7}{(x_8+1) (-x_6+x_7-1)},
\eea
in which $x_1,...,x_8$ are free variables. 
Note that \eqref{Eq:KinVariableTransform-1} is the explicit expression for the parametrization in \eqref{Eq:MomentumTwistorParametrization}. 
It is easily checked that this defines for generic values of $\vec{v}$ an invertible transformation between 
the eight kinematic variables of $\vec{v}$ and $x_1,\dots,x_8$.
While some matrix entries in $Z$ are not so simple rational expressions of these variables, this particular choice has the benefit that it rationalizes three square roots, $\Delta_5,\Delta_6$ and $\Delta_7$, whereas a generic momentum twistor parametrization only rationalizes $\Delta_7$.

\section{The $R_i$ terms of the letters}\label{sec:R-termsInLetter}

All the $R_i$ that are shown in the letters in \eqref{Eq:PentagonLetter} have the explicit expressions as follows:
\bea
R_{28}&=&-s_{34} m_1^4-2 m_2^2 s_{12} m_1^2+s_{12} s_{15} m_1^2+m_2^2 s_{34} m_1^2+s_{12} s_{34} m_1^2-s_{12}^2 s_{15}+m_2^2 s_{12} s_{15},  \nn
R_{29}&=&-s_{15} m_3^4+m_2^2 s_{15} m_3^2-2 m_2^2 s_{23} m_3^2+s_{15} s_{23} m_3^2+s_{23} s_{34} m_3^2-s_{23}^2 s_{34}+m_2^2 s_{23} s_{34}, \nn
R_{32}&=&-s_{12} s_{15}^2+m_2^2 s_{12} s_{15}+m_1^2 s_{34} s_{15}-2 m_2^2 s_{34} s_{15}+s_{12} s_{34} s_{15}-m_1^2 s_{34}^2+m_1^2 m_2^2 s_{34}, \nn
R_{33}&=&-m_3^2 s_{15}^2+m_2^2 m_3^2 s_{15}-2 m_2^2 s_{34} s_{15}+m_3^2 s_{34} s_{15}+s_{23} s_{34} s_{15}-s_{23} s_{34}^2+m_2^2 s_{23} s_{34}, \nn
R_{34}&=&m_3^4 m_1^4+s_{34}^2 m_1^4-2 m_3^2 s_{34} m_1^4-2 s_{23} s_{34}^2 m_1^2-2 m_3^4 s_{15} m_1^2+2 m_3^2 s_{12} s_{15} m_1^2-2 m_3^2 s_{12} s_{23} m_1^2 \nn
&&~~~+2 m_3^2 s_{15} s_{34} m_1^2-2 s_{12} s_{15} s_{34} m_1^2+2 m_3^2 s_{23} s_{34} m_1^2+2 s_{12} s_{23} s_{34} m_1^2-2 s_{34}^2 s_{45} m_1^2 \nn
&&~~~-2 m_2^2 m_3^2 s_{45} m_1^2+2 m_3^2 s_{15} s_{45} m_1^2+2 m_2^2 s_{34} s_{45} m_1^2+2 m_3^2 s_{34} s_{45} m_1^2+2 s_{15} s_{34} s_{45} m_1^2 \nn
&&~~~-4 s_{23} s_{34} s_{45} m_1^2+m_3^4 s_{15}^2+s_{12}^2 s_{15}^2-2 m_3^2 s_{12} s_{15}^2+s_{12}^2 s_{23}^2+s_{23}^2 s_{34}^2+m_2^4 s_{45}^2+s_{15}^2 s_{45}^2 \nn
&&~~~+s_{34}^2 s_{45}^2-2 m_2^2 s_{15} s_{45}^2-2 m_2^2 s_{34} s_{45}^2-2 s_{15} s_{34} s_{45}^2-2 s_{12}^2 s_{15} s_{23}+2 m_3^2 s_{12} s_{15} s_{23}-2 s_{12} s_{23}^2 s_{34} \nn
&&~~~-2 m_3^2 s_{15} s_{23} s_{34}+2 s_{12} s_{15} s_{23} s_{34}-2 m_3^2 s_{15}^2 s_{45}-2 s_{12} s_{15}^2 s_{45}-2 s_{23} s_{34}^2 s_{45}+2 m_2^2 m_3^2 s_{15} s_{45} \nn
&&~~~+2 m_2^2 s_{12} s_{15} s_{45}-4 m_3^2 s_{12} s_{15} s_{45}-2 m_2^2 s_{12} s_{23} s_{45}+2 s_{12} s_{15} s_{23} s_{45}-4 m_2^2 s_{15} s_{34} s_{45} \nn
&&~~~+2 m_3^2 s_{15} s_{34} s_{45}+2 s_{12} s_{15} s_{34} s_{45}+2 m_2^2 s_{23} s_{34} s_{45}+2 s_{12} s_{23} s_{34} s_{45}+2 s_{15} s_{23} s_{34} s_{45}, \nn
R_{35}&=&-m_3^4 m_1^2+m_2^2 m_3^2 m_1^2-m_2^2 s_{12} m_1^2+m_3^2 s_{12} m_1^2+m_3^2 s_{23} m_1^2+s_{12} s_{23} m_1^2+m_2^2 s_{45} m_1^2+m_3^2 s_{45} m_1^2 \nn
&&~~~-s_{23} s_{45} m_1^2-s_{12} s_{23}^2-m_2^2 s_{45}^2-m_1^4 m_3^2-s_{12}^2 s_{23}-m_2^2 m_3^2 s_{23}+m_2^2 s_{12} s_{23}+m_3^2 s_{12} s_{23} \nn
&&~~~-m_2^4 s_{45}+m_2^2 m_3^2 s_{45}+m_2^2 s_{12} s_{45}-m_3^2 s_{12} s_{45}+m_2^2 s_{23} s_{45}+s_{12} s_{23} s_{45}, \nn
R_{44}&=&m_2^2 m_3^2 m_1^2-2 m_2^2 s_{12} m_1^2+m_3^2 s_{12} m_1^2+s_{12} s_{23} m_1^2+m_2^2 s_{45} m_1^2-m_1^4 m_3^2-s_{12}^2 s_{23}+m_2^2 s_{12} s_{23} \nn
&&~~~-m_2^4 s_{45}+m_2^2 s_{12} s_{45}, \nn
R_{45}&=&-s_{45} m_2^4+m_1^2 m_3^2 m_2^2-2 m_3^2 s_{23} m_2^2+s_{12} s_{23} m_2^2+m_3^2 s_{45} m_2^2+s_{23} s_{45} m_2^2-m_1^2 m_3^4-s_{12} s_{23}^2 \nn
&&~~~+m_1^2 m_3^2 s_{23}+m_3^2 s_{12} s_{23}, \nn
R_{46}&=&-m_1^2 m_3^4+m_1^2 s_{12} m_3^2+s_{12} s_{23} m_3^2+m_1^2 s_{45} m_3^2+m_2^2 s_{45} m_3^2-2 s_{12} s_{45} m_3^2-m_2^2 s_{45}^2-s_{12}^2 s_{23} \nn
&&~~~+m_2^2 s_{12} s_{45}+s_{12} s_{23} s_{45}, \nn
R_{47}&=&m_3^2 s_{23} m_1^2+s_{12} s_{23} m_1^2+m_2^2 s_{45} m_1^2+m_3^2 s_{45} m_1^2-2 s_{23} s_{45} m_1^2-s_{12} s_{23}^2-m_2^2 s_{45}^2-m_1^4 m_3^2 \nn
&&~~~+m_2^2 s_{23} s_{45}+s_{12} s_{23} s_{45}, \nn
R_{48}&=&-m_1^2 m_3^2+s_{15} m_3^2+s_{12} s_{15}-s_{12} s_{23}+m_1^2 s_{34}-2 s_{15} s_{34}+s_{23} s_{34}+m_2^2 s_{45}-s_{15} s_{45}-s_{34} s_{45}, \nn
R_{49}&=&-s_{34} m_1^4+m_2^2 m_3^2 m_1^2-2 m_2^2 s_{12} m_1^2+m_3^2 s_{12} m_1^2+m_3^2 s_{15} m_1^2+s_{12} s_{15} m_1^2+s_{12} s_{23} m_1^2+m_2^2 s_{34} m_1^2 \nn
&&~~~-2 m_3^2 s_{34} m_1^2+s_{12} s_{34} m_1^2+s_{23} s_{34} m_1^2+m_2^2 s_{45} m_1^2-s_{15} s_{45} m_1^2+s_{34} s_{45} m_1^2-m_1^4 m_3^2-s_{12}^2 s_{15} \nn
&&~~~-m_2^2 m_3^2 s_{15}+m_2^2 s_{12} s_{15}+m_3^2 s_{12} s_{15}-s_{12}^2 s_{23}+m_2^2 s_{12} s_{23}-2 s_{12} s_{15} s_{23}-m_2^2 s_{23} s_{34}+s_{12} s_{23} s_{34} \nn
&&~~~-m_2^4 s_{45}+m_2^2 s_{12} s_{45}+m_2^2 s_{15} s_{45}+s_{12} s_{15} s_{45}+m_2^2 s_{34} s_{45}-s_{12} s_{34} s_{45}, \nn
R_{50}&=&-m_3^2 s_{15}^2-s_{12} s_{15}^2+s_{45} s_{15}^2+m_1^2 m_3^2 s_{15}+s_{12} s_{23} s_{15}+m_1^2 s_{34} s_{15}+s_{23} s_{34} s_{15}-m_2^2 s_{45} s_{15} \nn
&&~~~-s_{34} s_{45} s_{15}-2 m_1^2 s_{23} s_{34}, \nn
R_{51}&=&m_3^2 m_1^4-s_{34} m_1^4-2 m_3^2 s_{15} m_1^2+s_{12} s_{15} m_1^2-s_{12} s_{23} m_1^2+s_{15} s_{34} m_1^2+s_{23} s_{34} m_1^2-m_2^2 s_{45} m_1^2+s_{15} s_{45} m_1^2 \nn
&&~~~+s_{34} s_{45} m_1^2+m_3^2 s_{15}^2-s_{12} s_{15}^2+s_{12} s_{15} s_{23}-s_{15} s_{23} s_{34}-s_{15}^2 s_{45}+m_2^2 s_{15} s_{45}-2 s_{12} s_{15} s_{45}+s_{15} s_{34} s_{45}, \nn
R_{52}&=&-m_3^2 s_{15}^2+s_{12} s_{15}^2-s_{45} s_{15}^2+m_1^2 m_3^2 s_{15}+m_3^2 s_{23} s_{15}-2 s_{12} s_{23} s_{15}-m_1^2 s_{34} s_{15}+s_{23} s_{34} s_{15}+m_2^2 s_{45} s_{15} \nn
&&~~~-2 m_3^2 s_{45} s_{15}+s_{23} s_{45} s_{15}+s_{34} s_{45} s_{15}+s_{12} s_{23}^2-m_1^2 m_3^2 s_{23}-s_{23}^2 s_{34}+m_1^2 s_{23} s_{34}-m_2^2 s_{23} s_{45}+s_{23} s_{34} s_{45}, \nn
R_{53}&=&-s_{45} m_2^4+m_1^2 m_3^2 m_2^2+m_3^2 s_{15} m_2^2-s_{12} s_{15} m_2^2-2 m_3^2 s_{23} m_2^2+s_{12} s_{23} m_2^2-m_1^2 s_{34} m_2^2+s_{23} s_{34} m_2^2 \nn
&&~~~+m_3^2 s_{45} m_2^2+s_{15} s_{45} m_2^2+s_{23} s_{45} m_2^2+s_{34} s_{45} m_2^2-m_1^2 m_3^4-s_{12} s_{23}^2-m_3^4 s_{15}-2 m_1^2 m_3^2 s_{15} \nn
&&~~~+m_3^2 s_{12} s_{15}+m_1^2 m_3^2 s_{23}+m_3^2 s_{12} s_{23}+m_3^2 s_{15} s_{23}+s_{12} s_{15} s_{23}-s_{23}^2 s_{34}+m_1^2 m_3^2 s_{34}+m_1^2 s_{23} s_{34} \nn
&&~~~+m_3^2 s_{23} s_{34}-2 s_{12} s_{23} s_{34}+m_3^2 s_{15} s_{45}-s_{15} s_{23} s_{45}-m_3^2 s_{34} s_{45}+s_{23} s_{34} s_{45}, \nn
R_{54}&=&s_{45} m_2^4-m_1^2 m_3^2 m_2^2+m_3^2 s_{15} m_2^2+s_{12} s_{15} m_2^2-s_{12} s_{23} m_2^2+m_1^2 s_{34} m_2^2-2 s_{15} s_{34} m_2^2+s_{23} s_{34} m_2^2 \nn
&&~~~-2 s_{15} s_{45} m_2^2-2 s_{34} s_{45} m_2^2-m_3^2 s_{15}^2-s_{12} s_{15}^2-m_1^2 s_{34}^2-s_{23} s_{34}^2+m_1^2 m_3^2 s_{15}-2 m_3^2 s_{12} s_{15}+s_{12} s_{15} s_{23} \nn
&&~~~+m_1^2 m_3^2 s_{34}+m_1^2 s_{15} s_{34}+m_3^2 s_{15} s_{34}+s_{12} s_{15} s_{34}-2 m_1^2 s_{23} s_{34}+s_{12} s_{23} s_{34}+s_{15} s_{23} s_{34}+s_{15}^2 s_{45} \nn
&&~~~+s_{34}^2 s_{45}-2 s_{15} s_{34} s_{45}, \nn
R_{55}&=&-m_1^2 m_3^4+s_{15} m_3^4+m_1^2 s_{12} m_3^2-2 s_{12} s_{15} m_3^2+s_{12} s_{23} m_3^2+m_1^2 s_{34} m_3^2-s_{23} s_{34} m_3^2+m_1^2 s_{45} m_3^2+m_2^2 s_{45} m_3^2 \nn
&&~~~-2 s_{12} s_{45} m_3^2-2 s_{15} s_{45} m_3^2+s_{34} s_{45} m_3^2-m_2^2 s_{45}^2+s_{15} s_{45}^2-s_{34} s_{45}^2+s_{12}^2 s_{15}-s_{12}^2 s_{23}-m_1^2 s_{12} s_{34} \nn
&&~~~+s_{12} s_{23} s_{34}+m_2^2 s_{12} s_{45}-2 s_{12} s_{15} s_{45}+s_{12} s_{23} s_{45}+m_1^2 s_{34} s_{45}-2 m_2^2 s_{34} s_{45}+s_{12} s_{34} s_{45}+s_{23} s_{34} s_{45}, \nn
R_{56}&=&s_{34} m_1^4+m_3^2 s_{15} m_1^2-s_{12} s_{15} m_1^2+m_3^2 s_{23} m_1^2+s_{12} s_{23} m_1^2-2 s_{23} s_{34} m_1^2+m_2^2 s_{45} m_1^2+m_3^2 s_{45} m_1^2+s_{15} s_{45} m_1^2 \nn
&&~~~-2 s_{23} s_{45} m_1^2-2 s_{34} s_{45} m_1^2-s_{12} s_{23}^2-m_2^2 s_{45}^2-s_{15} s_{45}^2+s_{34} s_{45}^2-m_1^4 m_3^2-m_3^2 s_{15} s_{23}+s_{12} s_{15} s_{23} \nn
&&~~~+s_{23}^2 s_{34}-2 m_2^2 s_{15} s_{45}+m_3^2 s_{15} s_{45}+s_{12} s_{15} s_{45}+m_2^2 s_{23} s_{45}+s_{12} s_{23} s_{45}+s_{15} s_{23} s_{45}-2 s_{23} s_{34} s_{45}, \nn
R_{57}&=&m_3^4 m_1^4-m_3^2 s_{34} m_1^4-m_3^4 s_{15} m_1^2+m_3^2 s_{12} s_{15} m_1^2-2 m_3^2 s_{12} s_{23} m_1^2+m_3^2 s_{23} s_{34} m_1^2+s_{12} s_{23} s_{34} m_1^2 \nn
&&~~~-2 m_2^2 m_3^2 s_{45} m_1^2+m_3^2 s_{15} s_{45} m_1^2+m_2^2 s_{34} s_{45} m_1^2+m_3^2 s_{34} s_{45} m_1^2-2 s_{23} s_{34} s_{45} m_1^2+s_{12}^2 s_{23}^2+m_2^4 s_{45}^2 \nn
&&~~~-m_2^2 s_{15} s_{45}^2-m_2^2 s_{34} s_{45}^2-s_{12}^2 s_{15} s_{23}+m_3^2 s_{12} s_{15} s_{23}-s_{12} s_{23}^2 s_{34}+m_2^2 m_3^2 s_{15} s_{45}+m_2^2 s_{12} s_{15} s_{45} \nn
&&-2 m_3^2 s_{12} s_{15} s_{45}-2 m_2^2 s_{12} s_{23} s_{45}+s_{12} s_{15} s_{23} s_{45}+m_2^2 s_{23} s_{34} s_{45}+s_{12} s_{23} s_{34} s_{45}.
\eea


\section{Supplementary material}
\label{sect:supplement}

Attached to the arxiv version of this article are the following auxiliary files in {\tt Mathematica} syntax.
\begin{itemize}
\item \texttt{pent3mAB.m:} Alphabet and square roots.

\item \texttt{UTMatrix.m:} Matrix of canonical differential equations.

\item \texttt{kinvecReplace2.m:} Transformation between kinematic variables and $\{x_1,\dots,x_8\}$.

\item \texttt{sDpentReplace2.m:} Rationalized expressions of square roots: $\Delta_1,\dots,\Delta_7$.

\item \texttt{x1Replace.m:} Mapping between $x_1$ and $t_1,....,t_4$.

\item \texttt{FeynIntegralwithBoundaryweight1-3.m:} The results of 19 canonical integrals up to weight-3.

\item \texttt{FeynIntegralwithBoundaryweight4.m:} Pentagon integral of weight-4.

\item \texttt{lettertPentvector2.m:} Arguments corresponding to variables $\{x_1,t_1,\dots,t_4\}$ in the MPLs.

\item \texttt{letterkinvecPent2.m:} Arguments corresponding to variables $x_2,\dots,x_8$ in the MPLs.
\end{itemize}

\bibliographystyle{JHEP}
\bibliography{reference}

\providecommand{\href}[2]{#2}\begingroup\raggedright\begin{thebibliography}{10}

\bibitem{Vollinga:2004sn}
J.~Vollinga and S.~Weinzierl, {\it {Numerical evaluation of multiple
  polylogarithms}},  {\em Comput. Phys. Commun.} {\bf 167} (2005) 177,
  [\href{http://arxiv.org/abs/hep-ph/0410259}{{\tt hep-ph/0410259}}].

\bibitem{Maitre:2005uu}
D.~Maitre, {\it {HPL, a Mathematica implementation of the harmonic
  polylogarithms}},  {\em Comput. Phys. Commun.} {\bf 174} (2006) 222--240,
  [\href{http://arxiv.org/abs/hep-ph/0507152}{{\tt hep-ph/0507152}}].

\bibitem{Maitre:2007kp}
D.~Maitre, {\it {Extension of HPL to complex arguments}},  {\em Comput. Phys.
  Commun.} {\bf 183} (2012) 846,
  [\href{http://arxiv.org/abs/hep-ph/0703052}{{\tt hep-ph/0703052}}].

\bibitem{Naterop:2019xaf}
L.~Naterop, A.~Signer, and Y.~Ulrich, {\it {handyG \textemdash{}Rapid numerical
  evaluation of generalised polylogarithms in Fortran}},  {\em Comput. Phys.
  Commun.} {\bf 253} (2020) 107165,
  [\href{http://arxiv.org/abs/1909.01656}{{\tt arXiv:1909.01656}}].

\bibitem{Duhr:2019tlz}
C.~Duhr and F.~Dulat, {\it {PolyLogTools \textemdash{} polylogs for the
  masses}},  {\em JHEP} {\bf 08} (2019) 135,
  [\href{http://arxiv.org/abs/1904.07279}{{\tt arXiv:1904.07279}}].

\bibitem{Wang:2021imw}
Y.~Wang, L.~L. Yang, and B.~Zhou, {\it {FastGPL: a C++ library for fast
  evaluation of generalized polylogarithms}},
  \href{http://arxiv.org/abs/2112.04122}{{\tt arXiv:2112.04122}}.

\bibitem{Liu:2022chg}
X.~Liu and Y.-Q. Ma, {\it {AMFlow: A Mathematica package for Feynman integrals
  computation via auxiliary mass flow}},  {\em Comput. Phys. Commun.} {\bf 283}
  (2023) 108565, [\href{http://arxiv.org/abs/2201.11669}{{\tt
  arXiv:2201.11669}}].

\bibitem{Liu:2017jxz}
X.~Liu, Y.-Q. Ma, and C.-Y. Wang, {\it {A Systematic and Efficient Method to
  Compute Multi-loop Master Integrals}},  {\em Phys. Lett. B} {\bf 779} (2018)
  353--357, [\href{http://arxiv.org/abs/1711.09572}{{\tt arXiv:1711.09572}}].

\bibitem{Liu:2022mfb}
Z.-F. Liu and Y.-Q. Ma, {\it {Determining Feynman Integrals with Only Input
  from Linear Algebra}},  {\em Phys. Rev. Lett.} {\bf 129} (2022), no.~22
  222001, [\href{http://arxiv.org/abs/2201.11637}{{\tt arXiv:2201.11637}}].

\bibitem{Hidding:2020ytt}
M.~Hidding, {\it {DiffExp, a Mathematica package for computing Feynman
  integrals in terms of one-dimensional series expansions}},
  \href{http://arxiv.org/abs/2006.05510}{{\tt arXiv:2006.05510}}.

\bibitem{Chicherin:2017dob}
D.~Chicherin, J.~Henn, and V.~Mitev, {\it {Bootstrapping pentagon functions}},
  {\em JHEP} {\bf 05} (2018) 164, [\href{http://arxiv.org/abs/1712.09610}{{\tt
  arXiv:1712.09610}}].

\bibitem{Chicherin:2020oor}
D.~Chicherin and V.~Sotnikov, {\it {Pentagon Functions for Scattering of Five
  Massless Particles}},  {\em JHEP} {\bf 20} (2020) 167,
  [\href{http://arxiv.org/abs/2009.07803}{{\tt arXiv:2009.07803}}].

\bibitem{Chicherin:2021dyp}
D.~Chicherin, V.~Sotnikov, and S.~Zoia, {\it {Pentagon functions for one-mass
  planar scattering amplitudes}},  {\em JHEP} {\bf 01} (2022) 096,
  [\href{http://arxiv.org/abs/2110.10111}{{\tt arXiv:2110.10111}}].

\bibitem{Kotikov:1990kg}
A.~V. Kotikov, {\it {Differential equations method: New technique for massive
  Feynman diagrams calculation}},  {\em Phys. Lett. B} {\bf 254} (1991)
  158--164.

\bibitem{Kotikov:1991pm}
A.~V. Kotikov, {\it {Differential equation method: The Calculation of N point
  Feynman diagrams}},  {\em Phys. Lett. B} {\bf 267} (1991) 123--127. [Erratum:
  Phys.Lett.B 295, 409--409 (1992)].

\bibitem{Remiddi:1997ny}
E.~Remiddi, {\it {Differential equations for Feynman graph amplitudes}},  {\em
  Nuovo Cim.} {\bf A110} (1997) 1435--1452,
  [\href{http://arxiv.org/abs/hep-th/9711188}{{\tt hep-th/9711188}}].

\bibitem{Gehrmann:1999as}
T.~Gehrmann and E.~Remiddi, {\it Differential equations for two-loop four-point
  functions},  {\em Nucl. Phys.} {\bf B580} (2000) 485--518,
  [\href{http://arxiv.org/abs/hep-ph/9912329}{{\tt hep-ph/9912329}}].

\bibitem{Henn:2013pwa}
J.~M. Henn, {\it {Multiloop integrals in dimensional regularization made
  simple}},  {\em Phys. Rev. Lett.} {\bf 110} (2013) 251601,
  [\href{http://arxiv.org/abs/1304.1806}{{\tt arXiv:1304.1806}}].

\bibitem{Besier:2019hqd}
M.~Besier, D.~Festi, M.~Harrison, and B.~Naskr\k{e}cki, {\it {Arithmetic and
  geometry of a K3 surface emerging from virtual corrections to
  Drell\textendash{}Yan scattering}},  {\em Commun. Num. Theor. Phys.} {\bf 14}
  (2020), no.~4 863--911, [\href{http://arxiv.org/abs/1908.01079}{{\tt
  arXiv:1908.01079}}].

\bibitem{Heller:2019gkq}
M.~Heller, A.~von Manteuffel, and R.~M. Schabinger, {\it {Multiple
  polylogarithms with algebraic arguments and the two-loop EW-QCD Drell-Yan
  master integrals}},  {\em Phys. Rev. D} {\bf 102} (2020), no.~1 016025,
  [\href{http://arxiv.org/abs/1907.00491}{{\tt arXiv:1907.00491}}].

\bibitem{Panzer:2014caa}
E.~Panzer, {\it {Algorithms for the symbolic integration of hyperlogarithms
  with applications to Feynman integrals}},  {\em Comput. Phys. Commun.} {\bf
  188} (2014) 148--166, [\href{http://arxiv.org/abs/1403.3385}{{\tt
  arXiv:1403.3385}}].

\bibitem{Bourjaily:2021lnz}
J.~L. Bourjaily, Y.-H. He, A.~J. McLeod, M.~Spradlin, C.~Vergu, M.~Volk, M.~von
  Hippel, and M.~Wilhelm, {\it {Direct Integration for Multi-Leg Amplitudes:
  Tips, Tricks, and When They Fail}},  in {\em {Antidifferentiation and the
  Calculation of Feynman Amplitudes}}, 3, 2021.
\newblock \href{http://arxiv.org/abs/2103.15423}{{\tt arXiv:2103.15423}}.

\bibitem{Goncharov:2010jf}
A.~B. Goncharov, M.~Spradlin, C.~Vergu, and A.~Volovich, {\it {Classical
  Polylogarithms for Amplitudes and Wilson Loops}},  {\em Phys. Rev. Lett.}
  {\bf 105} (2010) 151605, [\href{http://arxiv.org/abs/1006.5703}{{\tt
  arXiv:1006.5703}}].

\bibitem{Duhr:2011zq}
C.~Duhr, H.~Gangl, and J.~R. Rhodes, {\it {From polygons and symbols to
  polylogarithmic functions}},  {\em JHEP} {\bf 10} (2012) 075,
  [\href{http://arxiv.org/abs/1110.0458}{{\tt arXiv:1110.0458}}].

\bibitem{Papadopoulos:2014lla}
C.~G. Papadopoulos, {\it {Simplified differential equations approach for Master
  Integrals}},  {\em JHEP} {\bf 07} (2014) 088,
  [\href{http://arxiv.org/abs/1401.6057}{{\tt arXiv:1401.6057}}].

\bibitem{Papadopoulos:2015jft}
C.~G. Papadopoulos, D.~Tommasini, and C.~Wever, {\it {The Pentabox Master
  Integrals with the Simplified Differential Equations approach}},  {\em JHEP}
  {\bf 04} (2016) 078, [\href{http://arxiv.org/abs/1511.09404}{{\tt
  arXiv:1511.09404}}].

\bibitem{Syrrakos:2021nij}
N.~Syrrakos, {\it {One-loop Feynman integrals for 2 \textrightarrow{} 3
  scattering involving many scales including internal masses}},  {\em JHEP}
  {\bf 10} (2021) 041, [\href{http://arxiv.org/abs/2107.02106}{{\tt
  arXiv:2107.02106}}].

\bibitem{Duhr:2020gdd}
C.~Duhr and F.~Brown, {\it {A double integral of dlog forms which is not
  polylogarithmic}},  {\em PoS} {\bf MA2019} (2022) 005,
  [\href{http://arxiv.org/abs/2006.09413}{{\tt arXiv:2006.09413}}].

\bibitem{Kreer:2021rim}
P.~A. Kreer, R.~Runkel, and S.~Weinzierl, {\it {Feynman integrals for binary
  systems of black holes}},  {\em SciPost Phys. Proc.} {\bf 7} (2022) 023,
  [\href{http://arxiv.org/abs/2110.15654}{{\tt arXiv:2110.15654}}].

\bibitem{Brown:2013qva}
F.~Brown, {\it {Iterated integrals in quantum field theory}},  in {\em
  {Geometric and Topological Methods for Quantum Field Theory Proceedings of
  the 2009 Villa de Leyva Summer School}}, pp.~188--240, Cambridge University
  Press, 2013.

\bibitem{Brown:2014aa}
F.~{Brown}, {\it {Multiple Modular Values and the relative completion of the
  fundamental group of $M\_{1,1}$}}, .

\bibitem{Walden:2020odh}
M.~Walden and S.~Weinzierl, {\it {Numerical evaluation of iterated integrals
  related to elliptic Feynman integrals}},  {\em Comput. Phys. Commun.} {\bf
  265} (2021) 108020.

\bibitem{Chen}
K.-T. Chen, {\it Iterated path integrals},  {\em Bull. Amer. Math. Soc.} {\bf
  83} (1977) 831--879.

\bibitem{Mitev:2018kie}
V.~Mitev and Y.~Zhang, {\it {SymBuild: a package for the computation of
  integrable symbols in scattering amplitudes}},
  \href{http://arxiv.org/abs/1809.05101}{{\tt arXiv:1809.05101}}.

\bibitem{Bonetti:2020hqh}
M.~Bonetti, E.~Panzer, V.~A. Smirnov, and L.~Tancredi, {\it {Two-loop mixed
  QCD-EW corrections to $gg \to Hg$}},  {\em JHEP} {\bf 11} (2020) 045,
  [\href{http://arxiv.org/abs/2007.09813}{{\tt arXiv:2007.09813}}].

\bibitem{Besier:2020hjf}
M.~Besier and D.~Festi, {\it {Rationalizability of square roots}},  {\em J.
  Symb. Comput.} {\bf 106} (2021) 48--67,
  [\href{http://arxiv.org/abs/2006.07121}{{\tt arXiv:2006.07121}}].

\bibitem{Festi:2021tyq}
D.~Festi and A.~Hochenegger, {\it {Rationalizability of field extensions with a
  view towards Feynman integrals}},  {\em J. Geom. Phys.} {\bf 178} (2022)
  104536, [\href{http://arxiv.org/abs/2106.05621}{{\tt arXiv:2106.05621}}].

\bibitem{Lee:2013mka}
R.~N. Lee, {\it {LiteRed 1.4: a powerful tool for reduction of multiloop
  integrals}},  {\em J. Phys. Conf. Ser.} {\bf 523} (2014) 012059,
  [\href{http://arxiv.org/abs/1310.1145}{{\tt arXiv:1310.1145}}].

\bibitem{Arkani-Hamed:2010pyv}
N.~Arkani-Hamed, J.~L. Bourjaily, F.~Cachazo, and J.~Trnka, {\it {Local
  Integrals for Planar Scattering Amplitudes}},  {\em JHEP} {\bf 06} (2012)
  125, [\href{http://arxiv.org/abs/1012.6032}{{\tt arXiv:1012.6032}}].

\bibitem{Abreu:2017mtm}
S.~Abreu, R.~Britto, C.~Duhr, and E.~Gardi, {\it {Diagrammatic Hopf algebra of
  cut Feynman integrals: the one-loop case}},  {\em JHEP} {\bf 12} (2017) 090,
  [\href{http://arxiv.org/abs/1704.07931}{{\tt arXiv:1704.07931}}].

\bibitem{Tarasov:1996br}
O.~V. Tarasov, {\it {Connection between Feynman integrals having different
  values of the space-time dimension}},  {\em Phys. Rev. D} {\bf 54} (1996)
  6479--6490, [\href{http://arxiv.org/abs/hep-th/9606018}{{\tt
  hep-th/9606018}}].

\bibitem{Lee:2009dh}
R.~N. Lee, {\it {Space-time dimensionality D as complex variable: Calculating
  loop integrals using dimensional recurrence relation and analytical
  properties with respect to D}},  {\em Nucl. Phys. B} {\bf 830} (2010)
  474--492, [\href{http://arxiv.org/abs/0911.0252}{{\tt arXiv:0911.0252}}].

\bibitem{Weinzierl:2022eaz}
S.~Weinzierl, {\em {Feynman Integrals. A Comprehensive Treatment for Students
  and Researchers}}.
\newblock UNITEXT for Physics. Springer, 2022.

\bibitem{Chen:2022fyw}
J.~Chen, C.~Ma, and L.~L. Yang, {\it {Alphabet of one-loop Feynman integrals
  *}},  {\em Chin. Phys. C} {\bf 46} (2022), no.~9 093104,
  [\href{http://arxiv.org/abs/2201.12998}{{\tt arXiv:2201.12998}}].

\bibitem{Dlapa:2023cvx}
C.~Dlapa, M.~Helmer, G.~Papathanasiou, and F.~Tellander, {\it {Symbol alphabets
  from the Landau singular locus}},  {\em JHEP} {\bf 10} (2023) 161,
  [\href{http://arxiv.org/abs/2304.02629}{{\tt arXiv:2304.02629}}].

\bibitem{Ferguson:1979}
H.~R.~P. Ferguson and R.~W. Forcade, {\it {Generalization of the Euclidean
  algorithm for real numbers to all dimensions higher than two}},  {\em Bull.
  Amer. Math. Soc. (N.S.)} {\bf 1} (1979), no.~6 912 -- 914.

\bibitem{Ferguson:1992}
H.~R.~P. Ferguson and D.~H. Bailey, {\it A polynomial time, numerically stable
  integer relation algorithm},  {\em RNR Technical Report RNR-91-032} (1992).

\bibitem{Ferguson:1999}
H.~Ferguson, D.~Bailey, and S.~Arno, {\it {Analysis of PSLQ, an integer
  relation finding algorithm}},  {\em Math. Comput.} {\bf 68} (1999) 351--369.

\bibitem{Bailey:1999nv}
D.~H. Bailey and D.~J. Broadhurst, {\it {Parallel integer relation detection:
  Techniques and applications}},  {\em Math. Comput.} {\bf 70} (2001)
  1719--1736, [\href{http://arxiv.org/abs/math/9905048}{{\tt math/9905048}}].

\bibitem{Abreu:2020jxa}
S.~Abreu, H.~Ita, F.~Moriello, B.~Page, W.~Tschernow, and M.~Zeng, {\it
  {Two-Loop Integrals for Planar Five-Point One-Mass Processes}},  {\em JHEP}
  {\bf 11} (2020) 117, [\href{http://arxiv.org/abs/2005.04195}{{\tt
  arXiv:2005.04195}}].

\bibitem{Henn:2021cyv}
J.~Henn, T.~Peraro, Y.~Xu, and Y.~Zhang, {\it {A first look at the function
  space for planar two-loop six-particle Feynman integrals}},  {\em JHEP} {\bf
  03} (2022) 056, [\href{http://arxiv.org/abs/2112.10605}{{\tt
  arXiv:2112.10605}}].

\bibitem{Abreu:2024yit}
S.~Abreu, D.~Chicherin, V.~Sotnikov, and S.~Zoia, {\it {Two-loop five-point
  two-mass planar integrals and double Lagrangian insertions in a Wilson
  loop}},  {\em JHEP} {\bf 10} (2024) 167,
  [\href{http://arxiv.org/abs/2408.05201}{{\tt arXiv:2408.05201}}].

\bibitem{Abreu:2024fei}
S.~Abreu, P.~F. Monni, B.~Page, and J.~Usovitsch, {\it {Planar Six-Point
  Feynman Integrals for Four-Dimensional Gauge Theories}},
  \href{http://arxiv.org/abs/2412.19884}{{\tt arXiv:2412.19884}}.

\bibitem{Henn:2025xrc}
J.~Henn, A.~Matija\v{s}i\'ca, J.~Miczajka, T.~Peraro, Y.~Xu, and Y.~Zhang, {\it
  {Complete function space for planar two-loop six-particle scattering
  amplitudes}},  \href{http://arxiv.org/abs/2501.01847}{{\tt
  arXiv:2501.01847}}.

\bibitem{Hodges:2005bf}
A.~P. Hodges, {\it {Twistor diagram recursion for all gauge-theoretic tree
  amplitudes}},  \href{http://arxiv.org/abs/hep-th/0503060}{{\tt
  hep-th/0503060}}.

\bibitem{Hodges:2005aj}
A.~P. Hodges, {\it {Twistor diagrams for all tree amplitudes in gauge theory: A
  Helicity-independent formalism}},
  \href{http://arxiv.org/abs/hep-th/0512336}{{\tt hep-th/0512336}}.

\bibitem{Hodges:2009hk}
A.~Hodges, {\it {Eliminating spurious poles from gauge-theoretic amplitudes}},
  {\em JHEP} {\bf 05} (2013) 135, [\href{http://arxiv.org/abs/0905.1473}{{\tt
  arXiv:0905.1473}}].

\bibitem{Mason:2009sa}
L.~J. Mason and D.~Skinner, {\it {Scattering Amplitudes and BCFW Recursion in
  Twistor Space}},  {\em JHEP} {\bf 01} (2010) 064,
  [\href{http://arxiv.org/abs/0903.2083}{{\tt arXiv:0903.2083}}].

\bibitem{Weinzierl:2016bus}
S.~Weinzierl, {\it {Tales of 1001 Gluons}},  {\em Phys. Rept.} {\bf 676} (2017)
  1--101, [\href{http://arxiv.org/abs/1610.05318}{{\tt arXiv:1610.05318}}].

\bibitem{Gehrmann:2018yef}
T.~Gehrmann, J.~M. Henn, and N.~A. Lo~Presti, {\it {Pentagon functions for
  massless planar scattering amplitudes}},  {\em JHEP} {\bf 10} (2018) 103,
  [\href{http://arxiv.org/abs/1807.09812}{{\tt arXiv:1807.09812}}].

\end{thebibliography}\endgroup

\end{document}